\documentclass[
  twocolumn,
  english,
  aps,
  prapplied,
  longbibliography,
  superscriptaddress,
  amsmath,
  amssymb,
  floatfix,
]{revtex4-2}

\usepackage{graphicx}
\usepackage{dcolumn}
\usepackage{bm}

\usepackage{graphicx}
\usepackage{amsmath}
\usepackage{tabularx}
\usepackage{array}
\usepackage{booktabs}
\usepackage{makecell}
\usepackage{multirow}

\usepackage{float}
\makeatletter
\let\newfloat\newfloat@ltx
\makeatother
\usepackage{algorithm}
\usepackage{algpseudocode}

\usepackage{amsthm}
\newtheorem{proposition}{Proposition}

\usepackage{braket}

\begin{document}

\preprint{AAPM/123-QED}


\title{Towards Natural Gas Contract Selection\\via Quantum-Guided Independent Set Reduction}

\author{Vivek Dixit}
\email{vivekdixit@ibm.com}
\affiliation{IBM Research}

\author{Vaibhaw Kumar}
\affiliation{IBM Research}

\author{Kentaro Ohno}
\affiliation{IBM Research}

\author{Alberto Maldonado Romo}
\affiliation{IBM Research}

\author{Larry Bowden}
\email{larry.bowden@woodside.com}
\affiliation{Woodside Energy}

\date{\today}

\begin{abstract}
Selecting mutually compatible natural gas transportation contracts is a practically important optimization task in which operators must choose from many candidate agreements subject to temporal, infrastructural, and flow-related constraints. As the number of candidates grows, the resulting search space becomes difficult to explore exhaustively. We study a pairwise abstraction of this task, formulated as a Maximum Clique problem on a contract-compatibility graph, or equivalently as a Maximum Independent Set (MIS) problem on the complement graph.

Building on recent work, this paper studies a quantum--classical framework for solving large-scale MIS instances within the limitations of noisy quantum hardware. The approach combines iterative classical graph reduction with quantum-guided optimization to progressively simplify the search space while maintaining high solution quality. This enables large candidate spaces to be reduced to smaller subproblems that are more suitable for execution on current quantum computers.

We evaluate the approach on fifteen benchmark instances from the \textit{Quantum Optimization Benchmarking Library} (QOBLIB), obtaining an average approximation ratio of 0.996 and recovering optimal solutions for fourteen instances, including graphs with up to 186 vertices.
We further evaluate the algorithm on six synthetic pairwise contract-compatibility graphs containing up to 900 contracts, where the proposed method achieves an average approximation ratio of 0.989 and obtains optimal solutions in four cases. These experiments demonstrate the ability of the hybrid MIS solver to reduce industrially motivated graphs. The pairwise abstraction serves as the first step of a two-stage screening procedure that narrows the candidate contracts to a smaller set of mutually compatible ones, which can then be verified against pipeline-capacity constraints.
\end{abstract}

\maketitle

\section{Introduction}\label{sec:introduction}

In large-scale energy transportation systems, efficiently allocating limited pipeline capacity is an important operational and economic challenge.
Selecting mutually compatible natural gas transportation contracts is a practically important optimization task in which operators must choose from many candidate agreements subject to temporal, infrastructural, and flow-related constraints. As the number of candidate contracts grows, the search space becomes increasingly difficult to explore exhaustively. In this work, we consider a pairwise abstraction of this task in which contracts are represented as vertices of a compatibility graph and pairwise compatibility relationships are represented as edges. Solving the Maximum Clique problem on this compatibility graph 
yields the largest subset of mutually compatible contracts. 


The Maximum Independent Set (MIS) problem is a fundamental combinatorial optimization problem defined on an undirected graph $G=(V,E)$, where the objective is to identify the largest subset $S\subseteq V$ such that no two vertices in $S$ are adjacent~\cite{Garey1979}. The MIS problem is closely related to the Maximum Clique through graph complementation: a maximum clique in a graph $G$ corresponds exactly to a maximum independent set in its complement graph $\bar{G}$~\cite{Bomze1999}. As one of Karp’s original NP-complete problems~\cite{Karp1972}, MIS has no known polynomial-time exact algorithm unless $\mathrm{P}=\mathrm{NP}$, and remains notoriously difficult to solve and approximate in the worst case~\cite{Hastad1996}. Consequently, exact algorithms exhibit exponential scaling, while heuristic methods may return sub-optimal solutions and lack optimality guarantees. The computational challenges become particularly severe for large and dense graphs encountered in practical applications, including finance \cite{Boginski2006, yalovetzky2026quantum}, wireless network scheduling~\cite{Joo2009}, bioinformatics~\cite{Halu2014}, compiler register allocation~\cite{Chaitin1982}, and computer vision feature matching~\cite{Leordeanu2005}. The inherent computational complexity of MIS has therefore motivated investigating alternative computing paradigms, including hybrid quantum--classical optimization algorithms~\cite{Farhi2014,Harrigan2021,schuetz2025qredumis, yalovetzky2026quantum}.

A variety of exact algorithms has been developed for MIS, each
balancing generality against computational efficiency. Branch-and-bound
methods systematically partition the solution space, pruning subproblems
whose upper bounds fall below the incumbent solution~\cite{Tarjan1977}.
The Bron--Kerbosch algorithm~\cite{BronKerbosch1973}, applicable to
MIS via graph complementation, remains a foundational reference, with
refinements by Tomita and Seki~\cite{Tomita2003} improving performance on sparse graphs. For structured graphs, dynamic programming yields
polynomial-time solutions on perfect graphs~\cite{Grotschel1984},
interval graphs~\cite{Golumbic1980}, and bounded-tree width
graphs~\cite{Bodlaender1993}. Integer Linear Programming (ILP)
formulations encode MIS as a binary optimization over vertex-selection
variables subject to edge constraints, and modern solvers such as CPLEX
and Gurobi with branch-and-cut strategies have extended solvable instance
sizes considerably~\cite{Nemhauser1975,Segundo2011}. More recently,
branch-and-reduce frameworks such as KaMIS combine exact kernelization
(data reduction) rules with branch-and-bound search, enabling exact
solutions on sparse real-world graphs with millions of vertices that are
intractable for classical branch-and-bound alone~\cite{Lamm2017,Hespe2019}.
Nevertheless, all exact methods share exponential worst-case complexity and can become prohibitive on dense, weakly structured graphs beyond a few
hundred vertices~\cite{Pardalos1994} --- precisely the regime
encountered in large-scale natural gas transportation networks.

Given the intractability of exact methods, heuristic and approximation
algorithms have been developed to obtain high-quality solutions within
acceptable time. Greedy algorithms iteratively select the minimum-degree
vertex and remove its neighbours, running in polynomial time but yielding
solutions far from optimal on dense or adversarially structured
graphs~\cite{Halldorsson1995,Alimonti1997}. Metaheuristics overcome
this by exploring broader regions of the solution space: Simulated
Annealing accepts worsening moves with a temperature-controlled
probability to escape local optima~\cite{Kirkpatrick1983}, while Tabu
Search maintains a memory structure that forbids recently visited
configurations~\cite{Glover1989}. Both have demonstrated substantially
improved quality over greedy initialization~\cite{Battiti1994}. For
massive sparse networks, ReduMIS intermixes polynomial-time
kernelization with an evolutionary local-search algorithm, iteratively
fixing likely-optimal low-degree vertices to reopen the reduction space,
and has been shown to find near-optimal or provably optimal solutions on
graphs with millions of vertices where classical metaheuristics
stall~\cite{Lamm2017}.
Semidefinite Programming
(SDP) relaxations represent the strongest known polynomial-time
approximation providing guarantees via the Lov\'{a}sz theta function
$\vartheta(G)$~\cite{Lovasz1979}, yet no polynomial-time algorithm can approximate MIS within
$n^{1-\epsilon}$ under standard assumptions~\cite{zuckerman_2006_inapproximability}. In
natural gas transportation scheduling, where solution quality directly
impacts revenue, the degraded performance of classical heuristics highlights challenges that motivate the investigation of alternative computational approaches.

Classical MIS methods face bottlenecks that limit their
applicability to large-scale, time-sensitive problems. 
Heuristics alleviate computational cost but introduce a quality--speed
trade-off: greedy methods sacrifice optimality, while metaheuristics
require careful tuning with no formal guarantee of solution
quality~\cite{Hoos2004} --- a critical limitation in revenue- and
safety-sensitive applications. 
Most fundamentally, local search methods
are susceptible to entrapment in local optima on the highly non-convex
MIS energy landscape~\cite{Andrade2012}. In natural gas
transportation scheduling, where the conflict graph can involve hundreds
to thousands of nodes and decisions must be made in near-real
time~\cite{Zheng2010}, these limitations motivate investigating alternative algorithms, including  quantum-classical methods. A key enabler is the Quadratic Unconstrained Binary Optimization (QUBO) formulation, which recasts the MIS problem as the minimization of a quadratic objective over binary variables~\cite{Lucas2014,Glover2019}, which is compatible with many quantum optimization algorithms, such as the Quantum Approximate Optimization Algorithm (QAOA)~\cite{Farhi2014}.  
QAOA employs parameterized variational circuits to generate high-quality approximate solutions and has been demonstrated for MIS on superconducting quantum processors~\cite{Harrigan2021,ebadi2022}. However, the limitations of noisy intermediate-scale quantum (NISQ) hardware~\cite{Preskill2018}, including restricted qubit counts, shallow circuit depths, and gate errors, currently prevent purely quantum methods from solving large industrial-scale optimization problems efficiently.

To overcome these limitations, quantum--classical algorithms combine the strengths of classical preprocessing with quantum-guided optimization. Schuetz \textit{et al.}~\cite{schuetz2025qredumis} proposed \textit{qReduMIS}, which integrates classical kernelization with quantum-derived information to identify vertices that are likely to belong to large independent sets, thereby enabling additional graph reductions. The method was validated on hardware-native random Union Jack graph instances using Rydberg quantum hardware, demonstrating improved performance relative to the other quantum optimization methods considered in that study. Similarly, Wybo \textit{et al.}~\cite{wybo2026} introduced a hybrid MIS algorithm that combines shallow-depth QAOA ($p=4$) with a classical greedy heuristic using pre-computed QAOA parameters. Their approach was implemented on a 20-qubit superconducting quantum processor and reported improved solution quality relative to classical greedy heuristics on the regular graph instances considered. More recently, Yalovetzky \textit{et al.}~\cite{yalovetzky2026quantum} proposed a quantum--classical framework for portfolio diversification, formulated as an MIS problem. Their method uses QAOA measurements to identify likely optimal vertices, enabling further optimal classical reductions rather than relying on QAOA to directly construct the final solution. Evaluated on real financial datasets using QAOA with depth $p=2$ on a trapped-ion processor, the algorithm solved quantum subproblems using up to 78 qubits, achieved average approximation ratios exceeding $0.96$, and outperformed standalone QAOA on the largest benchmark instances.

This work studies how quantum-informed recursive MIS reduction can be adapted
to superconducting quantum hardware and applied to industrially motivated
natural-gas contract-selection problems. Building on recent qReduMIS-style
ideas, we combine exact graph reductions, QAOA-derived marginal selection,
hardware-aware circuit synthesis, and hierarchical parameter optimization.
We test the algorithm on instances from the \emph{Quantum Optimization Benchmarking Library} (QOBLIB) \cite{koch2026qoblib} as well as synthetic instances of the considered contract scheduling problem, where it consistently finds near-optimal solutions.

Concretely, the framework converts the problem into a QUBO model, iteratively
applies classical MIS reduction rules to simplify the graph, and processes the
resulting irreducible kernel with the Quantum Approximate Optimization
Algorithm (QAOA), using the measured vertex-inclusion probabilities to steer
subsequent reductions until the graph is fully resolved. Relative to prior
hybrid MIS solvers, the specific contributions of this paper are as follows:

\begin{itemize}
\item \textbf{Domain adaptation to natural gas contract scheduling.} We cast
the contract-selection problem in natural gas pipeline transportation as a
Maximum Clique instance on a pairwise contract-compatibility graph and, via
graph complementation, solve it as an MIS problem with the proposed hybrid
solver. We formalize the compatibility model---temporal overlap, shared
infrastructure, pairwise capacity screening, priorities, and flow
conflicts---and clearly delimit its scope, showing where the pairwise
abstraction departs from a full pipeline schedule with cumulative
segment--period capacity constraints.

\item \textbf{Hardware-aware circuit synthesis.} We
introduce a connectivity-aware circuit synthesis that embeds the QUBO
interaction graph onto the qubit topology of an IBM superconducting processor given a limited SWAP budget. Combined with a hierarchical QAOA parameter schedule (depth-one
grid search, COBYLA refinement, and recursive depth extension), this
helps control the two-qubit depth and enables finding high-quality solutions of MIS kernels containing up to 125 qubits on 156-qubit hardware.

\item \textbf{Systematic benchmarking on QOBLIB.} We evaluate the solver on
fifteen MIS instances from the Quantum Optimization
Benchmarking Library (QOBLIB), spanning 17--186 vertices and 39--5{,}173
edges. Against CPLEX-verified optima the hybrid solver attains an average
approximation ratio of 0.996 and recovers the exact optimum on fourteen of
fifteen instances, providing a hardware-executed evaluation to the open benchmark suite.

\item \textbf{Empirical analysis of quantum marginals versus uniform
selection.} We isolate the contribution of the quantum stage using a matched
uniform-sampling baseline that differs only in how candidate vertices are
selected after each reduction step. We then quantify the advantage of guiding
reductions using the full QAOA measurement distribution rather than relying
solely on a single lowest-energy bitstring. For the \textit{C125-9} instance
from QOBLIB, the quantum-guided strategy concentrates substantially more
probability mass on high-quality solutions---for example, it is $12\times$
more likely to reach an independent set of size $31$ or larger and recovers
the optimum in cases where uniform sampling never does---providing
evidence that the marginal probabilities contain valuable information for
guiding reductions.

\end{itemize}

A word on scope is warranted before we proceed. At the problem sizes
that fit today's quantum hardware, classical exact MIS solvers---branch-and-reduce
and kernelization-based tools such as KaMIS/ReduMIS~\cite{Lamm2017}---are
both faster and exact, and we make no claim to surpass them. Our contribution
is instead methodological. The value lies in the pipeline itself---hardware-aware
embedding, the hierarchical QAOA schedule, and quantum-guided reduction---which
transfers to near-term quantum optimization more broadly; here MIS is the
measuring stick, not the product. We adopt it as a testbed precisely because
its known optima and strong classical reductions let us isolate and quantify
the quantum contribution, and because it admits a genuine hardware demonstration
with ground truth known at up to 125 qubits. The bet we make explicit is that
the quantum workload scales with qubit count rather than with problem size, so
the same architecture is positioned to ride future hardware into the dense
regimes where exact classical methods break down. This paper should therefore
be read as near-term quantum-optimization methods research validated on a
benchmarkable problem, not as a faster MIS solver.

The remainder of this paper is organized as follows.
Section~\ref{sec:methods} presents the methodology, including the QUBO and
Ising formulation, the classical reduction and decomposition stages, the
hardware-aware circuit synthesis, the QAOA parameter-optimization schedule,
and the quantum-guided reduction procedure together with its randomized
baseline. Section~\ref{sec:results} reports the experimental evaluation on the
QOBLIB benchmarks and the natural gas contract-compatibility graphs, including
detailed case studies, the statistical analysis of quantum marginals versus
random selection, and the runtime and run-to-run reproducibility of the solver. 
Section~\ref{sec:conclusion} concludes and discusses directions for future
work.

\section{Methodology}\label{sec:methods}

Within this section we formally introduce the considered problem and describe each stage of the algorithm in detail, from the contract-compatibility graph through the iterative reduce--QAOA--select loop to the final independent set, as summarized in Figure~\ref{fig:pipeline}.

\begin{figure*}[t]
\centering
\includegraphics[width=\textwidth]{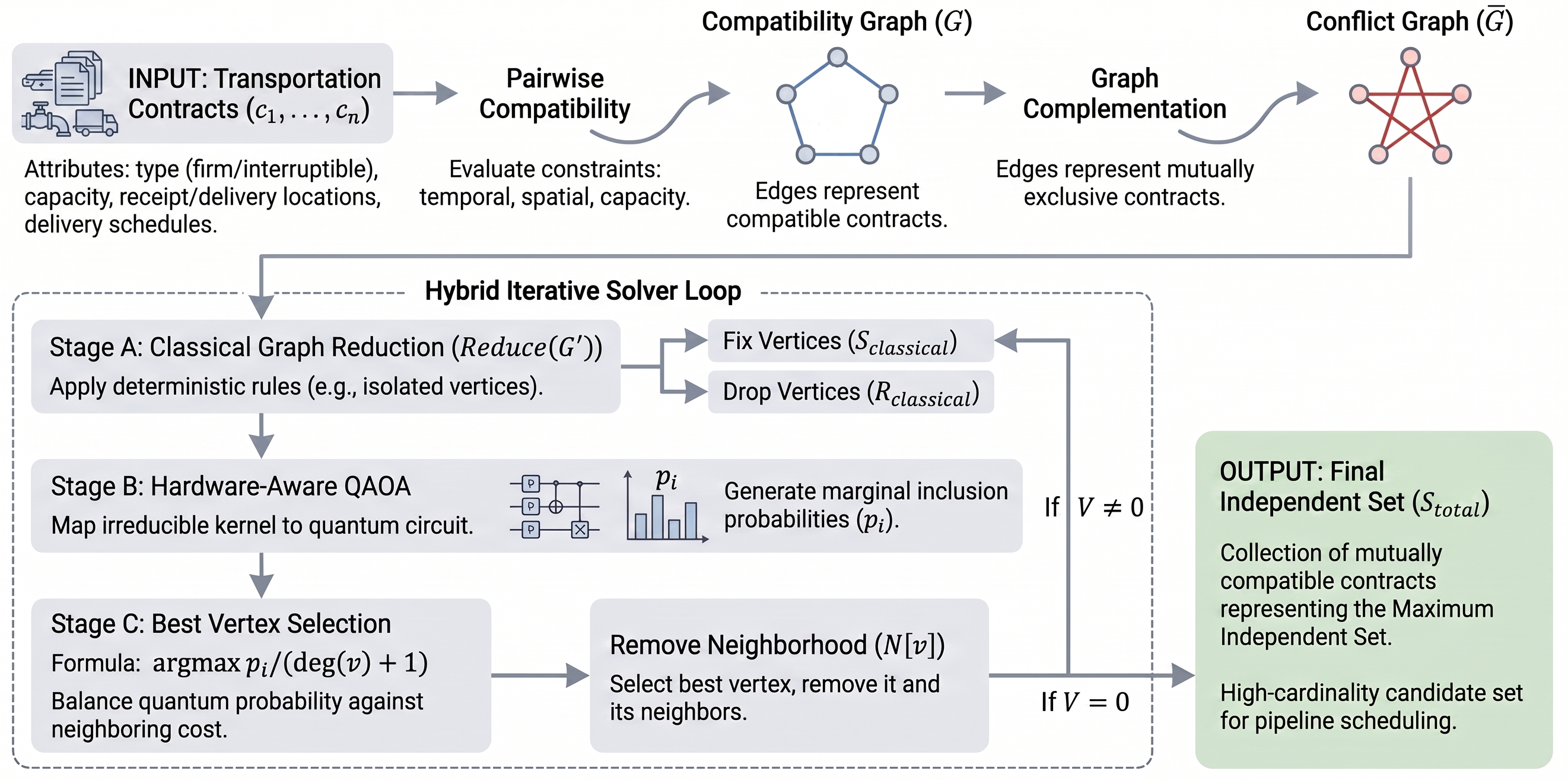}
\caption{Overview of the proposed hybrid quantum--classical workflow.
Transportation contracts form a compatibility graph $G$, whose complement
$\bar{G}$ is solved as a Maximum Independent Set (MIS) instance. The hybrid
solver then alternates between deterministic classical reduction, which fixes
provably-optimal vertices, and hardware-aware QAOA, whose measured marginal
probabilities $p_i$ rank the remaining vertices; the highest-scoring vertex is
added to the independent set and its closed neighborhood $N[v]$ is removed.
The reduced graph re-enters the loop until no vertices remain, at which point
the accumulated set $S_{\mathrm{total}}$, i.e., the constructed set of
mutually compatible contracts, is returned.}
\label{fig:pipeline}
\end{figure*}

\subsection{Problem Formulation}
As defined in Section~\ref{sec:introduction}, the Maximum Independent Set (MIS) problem seeks a maximum-cardinality subset $S\subseteq V$ of an undirected graph $G=(V,E)$ such that no two vertices in~$S$ are adjacent.

We define binary decision variables $x_i\in\{0,1\}$ for each vertex
$i\in V$, where $x_i=1$ indicates that vertex~$i$ is included in the
independent set. The MIS problem can be formulated as a Binary Linear Program (BLP)
\begin{equation}
\begin{aligned}
  \max_{x \in \{0, 1\}^n}   \quad & \sum_{i\in V} x_i \\
  \text{subject to} \quad & x_i + x_j \leq 1,
                        \quad \forall(i,j)\in E, \\
                        & x_i \in \{0,1\},
                        \quad \forall i\in V.
\end{aligned}
\end{equation}

Because the variables are binary, the adjacency constraint
$x_i+x_j\leq 1$ is equivalent to the quadratic constraint $x_ix_j=0$
for every $(i,j)\in E$. This representation is convenient for
constructing a QUBO
formulation, in which violations of the independence constraints are
incorporated into the objective through quadratic penalty terms. The QUBO formulation reads
\begin{equation}
  \min_{x \in \{0, 1\}^n}
  \quad  
  -\sum_{i\in V} x_i
  + P \sum_{(i,j)\in E} x_ix_j,
\end{equation}
where $P>1$ is a penalty parameter that ensures any optimal solution of the QUBO corresponds to a maximum independent set, and where we switched to minimization to follow common conventions. The first term encourages the selection of as many vertices
as possible; the second term penalizes simultaneous selection of
adjacent vertices.  The condition $P>1$ is sufficient because each selected vertex lowers the objective by exactly $1$, so any edge violation incurs a penalty $P>1$ that outweighs the gain of the additional vertex; every optimal QUBO solution is therefore
 edge-free and hence a maximum independent set.

For execution on quantum hardware, the QUBO is mapped to an Ising
Hamiltonian via the substitution $x_i=(1-z_i)/2$, where
$z_i\in\{-1,+1\}$ are spin variables. Replacing each $z_i$ with $Z_i$, the
Pauli-$Z$ operator acting on qubit~$i$, yields the cost Hamiltonian
\begin{equation}
  H_C = \sum_{i\in V} h_i Z_i
      + \sum_{(i,j)\in E} J_{ij} Z_i Z_j
      + \text{const}.
\end{equation}

where the
coefficients are given by
\begin{align}
  h_i &= \frac{1}{2}
        - \frac{P}{4}\sum_{j:(i,j)\in E}1\\
      &= \frac{1}{2} - \frac{P}{4}\deg(i),
  \\
  J_{ij} &= \frac{P}{4},
  \quad\forall(i,j)\in E,
\end{align}
with the constant offset
equal to $-\tfrac{1}{2}|V|+\tfrac{P}{4}|E|$.
Any ground state
of~$H_C$ encodes an optimal MIS solution, and $H_C$ serves as the
cost operator in the quantum optimization.

\subsection{Graph Preprocessing and Reduction}

The hybrid solver begins with a classical preprocessing stage that reduces the problem size before quantum optimization. Deterministic graph reduction rules eliminate vertices whose membership in a maximum independent set (MIS) can be decided exactly, after which the remaining graph is relabeled and decomposed into connected components. While the reduction may eliminate some globally optimal solutions when multiple optima exist, it is guaranteed to preserve at least one maximum independent set.

Given an input graph $G=(V,E)$, reduction rules are applied iteratively until no further simplifications are possible. The vertex set is partitioned into three disjoint subsets: vertices provably belonging to a MIS, denoted $S_{\mathrm{classical}}$; vertices excluded from further consideration, denoted $R_{\mathrm{classical}}$; and the remaining undecided vertices, which induce the reduced graph $G'=(V',E')$. Let $\deg(v)$ denote the degree of vertex $v$ and $N(v)=\{u\in V\mid(u,v)\in E\}$ its neighborhood. The following reduction rules are applied.

A vertex containing a self-loop cannot belong to an independent set:
\begin{equation}
(v,v)\in E
\;\Longrightarrow\;
v\in R_{\mathrm{classical}}.
\label{eq:selfloop}
\end{equation}

An isolated vertex can always be included:
\begin{equation}
\deg(v)=0
\;\Longrightarrow\;
v\in S_{\mathrm{classical}}.
\label{eq:isolated}
\end{equation}

A degree-one vertex is always preferred over its unique neighbour:
\begin{equation}
\deg(v)=1,\;N(v)=\{u\}
\;\Longrightarrow\;
\begin{cases}
v\in S_{\mathrm{classical}},\\
u\in R_{\mathrm{classical}}.
\end{cases}
\label{eq:degree1}
\end{equation}

If the neighborhood of a vertex forms a clique, the vertex dominates all of its neighbours and can be selected:
\begin{equation}
\forall\,u,w\in N(v),\;(u,w)\in E
\;\Longrightarrow\;
\begin{cases}
v\in S_{\mathrm{classical}},\\
N(v)\subseteq R_{\mathrm{classical}}.
\end{cases}
\label{eq:clique}
\end{equation}

The overall reduction is expressed as
\[
(G',\,S_{\mathrm{classical}},\,R_{\mathrm{classical}})
=
\textsc{Reduce}(G).
\]
Vertices in $S_{\mathrm{classical}}$ are immediately added to the partial independent set, while only the reduced graph $G'$ is forwarded to the quantum optimization stage.

To simplify the subsequent QUBO formulation and establish a one-to-one correspondence between graph vertices and qubits, the vertices of $G'$ are relabeled with consecutive integers. Let
\[
\phi:V_{\mathrm{original}}
\rightarrow
\{0,\ldots,|V'|-1\}
\]
denote the relabeling map and $\phi^{-1}$ its inverse. The relabeling preserves graph connectivity and enables measurement outcomes to be mapped back to the original vertex labels after quantum execution.

Finally, the reduced graph is decomposed into its connected components,
\[
G'_i=(V'_i,E'_i),
\qquad
i=1,\ldots,m,
\]
which are processed independently in non-increasing order of size,
$|V'_1|\ge|V'_2|\ge\cdots\ge|V'_m|$. Since no edges exist between distinct components, each component defines an independent optimization problem. This decomposition reduces the number of qubits required per quantum execution and enables multiple components to be solved in parallel when quantum resources are available.

\subsection{Hardware-Aware Quantum Circuit Generation}

To reduce routing overhead and limit circuit depth during compilation, the reduced problem graph is approximately mapped onto a hardware-compatible interaction graph before constructing the quantum circuit. Figure~\ref{fig:embedding} illustrates the overall procedure.

\begin{figure*}[t]
\centering
\includegraphics[width=\textwidth]{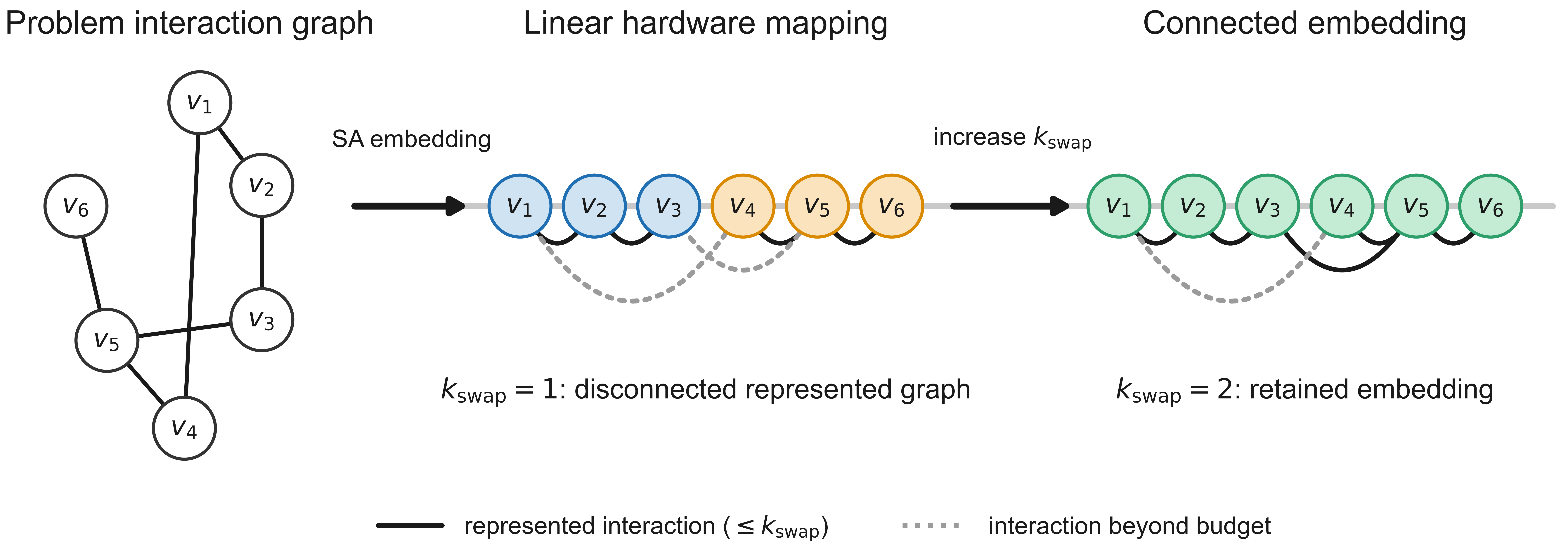}
\caption{Hardware-aware embedding of the problem interaction graph onto a
linear qubit topology. The problem interaction graph (left) is mapped by a
simulated-annealing heuristic onto a linear hardware-compatibility graph, in
which two qubits interact only if they lie within the SWAP budget
$k_{\mathrm{swap}}$. At a small budget (center) some interactions exceed the
budget and the represented graph is disconnected; increasing
$k_{\mathrm{swap}}$ (right) yields a single connected component, and the
smallest such budget is retained. Solid arcs denote interactions realizable
within the budget; dashed arcs denote interactions beyond it. Among the seeds
that achieve connectivity, the embedding with the smallest two-qubit depth is
selected.}
\label{fig:embedding}
\end{figure*}

A hardware compatibility graph containing the same number of vertices as the reduced problem graph is first constructed by modeling the target hardware as a linear qubit topology. In this graph, two qubits are considered adjacent if they can interact within a specified SWAP budget. A linear chain forms a subgraph of the heavy-hex coupling architecture used by current IBM superconducting quantum processors. Consequently, targeting this topology allows us to reach full connectivity in $|V'|-2$ layers of SWAP gates~\cite{de,crooks}.
Crucially, since noise deteriorates the performance of QAOA it is typically advantageous to generate the QAOA ansatz from a sub-set of the edges of the reduced graph~\cite{druagoi2026}.
A simulated-annealing heuristic is then used to map the QUBO interaction graph onto the hardware compatibility graph, maximizing the number of problem interactions that can be implemented within the SWAP budget while preserving connectivity of the resulting interaction graph. 

Rather than being fixed in advance, the SWAP budget $k_{\mathrm{swap}}$ is determined individually for each problem instance. Starting from a small value, $k_{\mathrm{swap}}$ is incrementally increased until the embedding produces a connected interaction graph, i.e., a single connected component without isolated qubits. The smallest budget satisfying this condition is selected. Since simulated annealing is inherently stochastic, different random seeds can generate different valid embeddings, resulting in varying two-qubit gate counts and circuit depths. After determining the minimum feasible SWAP budget, multiple random seeds are therefore evaluated, and the embedding producing the lowest two-qubit circuit depth is retained. This criterion is motivated by the fact that two-qubit gate count and depth are the primary contributors to noise accumulation on superconducting quantum hardware. The selected hardware-compatible interaction graph directly defines the Hamiltonian generating the QAOA circuit. Because the mapped interaction graph already conforms to the target hardware connectivity, transpilation requires little or no additional routing, producing shallower quantum circuits with reduced two-qubit gate overhead.
Related work has applied simulated annealing to reduce SWAP overhead in QAOA compilation. Montanez-Barrera~\textit{et al.}~\cite{montanezbarrera2025} use simulated annealing to optimize qubit ordering within SWAP-network and parity-twine-chain encodings, directly minimizing two-qubit gate count --similar in its use of annealing, but operating within a fixed circuit encoding rather than on an abstract hardware compatibility graph. Dupont~\textit{et al.}~\cite{dupont2024} embed $p{=}1$ QAOA subgraphs onto a linear qubit chain, sharing our line-topology target but using a SAT-solver-based isomorphism search instead of a stochastic heuristic, and without an instance-adaptive SWAP budget. U.S.\ Patent No.~11,537,770~\cite{patent11537770} describes a simulated-annealing procedure that generates candidate qubit-to-hardware subgraphs via depth-first search or swaps, accepted with a temperature-dependent probability -- algorithmically the closest match to our search mechanics, though it optimizes mapped-circuit fidelity rather than connectivity and edge coverage, and is not specific to a line topology or QAOA.

\subsection{QAOA Parameter Optimization}
\label{subsec:qaoa-training}
The Quantum Approximate Optimization Algorithm
(QAOA)~\cite{farhi2014quantum} of depth~$p$ is parameterized by
\[
\boldsymbol{\theta}
=
(\beta_1,\ldots,\beta_p,
 \gamma_1,\ldots,\gamma_p)
\]
where $\beta_i$ and $\gamma_i$ are the variational parameters associated with the mixer and cost layers, respectively.
These parameters define the variational quantum state $\ket{\psi(\boldsymbol{\theta})}$, which is obtained by applying the parameterized QAOA circuit to the initial uniform superposition state. The parameters are optimized to minimize the expectation value of the ideal QUBO cost Hamiltonian,
\begin{equation}
E(\boldsymbol{\theta})
=
\langle\psi(\boldsymbol{\theta})|H_C|\psi(\boldsymbol{\theta})\rangle,
\end{equation}
where $H_C$ denotes the original cost Hamiltonian derived directly from the QUBO formulation. Note that $H_C$ is distinct from the Hamiltonian implemented by the compiled quantum circuit. In our hardware-aware approach, the circuit realizes an approximate hardware-compatible Hamiltonian obtained after the interaction graph is modified to satisfy the connectivity constraints of the target device. Consequently, the optimized quantum state is prepared using the approximate compiled Hamiltonian, while its quality is evaluated with respect to the original objective Hamiltonian $H_C$.
Because the landscape is highly non-convex, a three-stage optimization strategy is employed, following operational guidance on parameter-setting methods for utility-scale QAOA~\cite{guo2026setting} and implemented using the open-source \texttt{qaoa\_training\_pipeline}~\cite{qaoatrainingpipeline}.
\paragraph{Stage~1: depth-one grid search.}
At $p=1$ the energy can be evaluated exactly and efficiently from the
graph structure alone, without full statevector
simulation~\cite{shaydulin2023evidence}. The parameter space is
scanned on a uniform $40\times40$ grid over
$\gamma\in[0,\pi/2]$ and $\beta\in[0,\pi]$; the lowest-energy pair
provides the initialization for Stage~2.
\paragraph{Stage~2: local refinement at $p=1$.}
Starting from the grid-search result, the derivative-free COBYLA
optimizer~\cite{powell1994direct} is run for up to 500 iterations
with initial trust-region radius $\rho=0.05$, yielding the optimized
depth-one parameters $(\gamma_1^*,\beta_1^*)$.
\paragraph{Stage~3: recursive depth extension.}
The depth-$k$ solution is extended to depth $k+1$ using the linear
interpolation scheme of Zhou~et~al.~\cite{zhou2020quantum}. Treating
the $k$ optimized parameters as samples at integer knot positions
$1,\ldots,k$, a new vector of $k+1$ values is formed by re-sampling at
$k+1$ evenly spaced points over $[1,k]$, with boundary values preserved
exactly and interior values obtained by linear interpolation. The
extended vector initializes COBYLA at depth~$k+1$. This step is
repeated until the target depth~$p$ is reached.
The target depth~$p$ is selected per instance in an incremental fashion.
The solver begins with a shallow circuit and increases~$p$ only when the
current depth fails to recover a satisfactory solution, exploiting the
fact that the recursive scheme reuses the depth-$k$ parameters to
warm-start depth~$k+1$. Easier kernels are therefore solved at low depth
($p=2$--$3$), while a larger depth (up to $p=5$) is reserved for the hardest
kernels, keeping the circuit as shallow as possible for each problem. This
accounts for the range of $p$ values reported in
Table~\ref{tab:circuit_characteristics}.
For $p\ge2$, the objective is evaluated using the Pauli propagation
method~\cite{fontana2023classical,shaydulin2023evidence,rudolph2023paulipropagation},
which evolves the observable~$H_C$ backwards through the circuit in the
Heisenberg picture and reads off the expectation value as the coefficient
of the identity string. Pauli strings with weight exceeding six or
absolute coefficient below $10^{-3}$ are discarded, keeping the
classical cost polynomial for the circuits considered here.

\subsection{Quantum-Guided Graph Reduction and Iterative Optimization}

The quantum component of the proposed hybrid solver is used as a
probabilistic oracle to provide vertex-selection information for iterative
graph reduction. The QAOA variational parameters are optimized entirely on a
classical computer using the Pauli propagation method described previously.
After obtaining the optimized parameter vector
\[
\boldsymbol{\theta}^*
=
(\beta_1^*,\ldots,\beta_p^*,
 \gamma_1^*,\ldots,\gamma_p^*),
\]
the parameters are bound to the circuit, which is subsequently executed on
IBM superconducting quantum hardware. Therefore, the quantum processor is
used exclusively for sampling candidate solutions, while the variational
optimization remains entirely classical.

For each reduced subproblem, the optimized circuit is executed for
$N_{\mathrm{shots}}$ measurements, generating a multiset of bitstrings
$\mathcal{M}=\{(b_i,n_i)\}$, where $b_i\in\{0,1\}^{n}$ and $n_i$ denotes the
number of occurrences of bitstring $b_i$. Each measured bitstring is
evaluated using the implemented Ising Hamiltonian
\[
H=\sum_i h_iZ_i+\sum_{(i,j)}J_{ij}Z_iZ_j,
\]
with corresponding energy
\[
E(b)=\sum_i h_is_i+\sum_{(i,j)}J_{ij}s_is_j,
\]
where $s_i=1$ for $b_i=0$ and $s_i=-1$ otherwise. The samples are ranked
according to increasing energy, with lower-energy states corresponding to
higher-quality candidate independent sets.

Rather than selecting only the single lowest-energy solution, the solver
exploits the statistical information contained in the low-energy measurement
distribution. Specifically, only a fraction $f$ of the lowest-energy
bitstrings is retained, corresponding to approximately
$N_{\mathcal{B}}\approx fN_{\mathrm{shots}}$ samples. The marginal inclusion
probability of vertex $i$ is estimated from this filtered sample set
$\mathcal{B}$ as
\[
p_i
=
\frac{1}{N_{\mathcal{B}}}
\sum_{b\in\mathcal{B},\,b_i=1}n_b,
\qquad
N_{\mathcal{B}}=\sum_{b\in\mathcal{B}}n_b .
\]
These marginal probabilities provide an estimate of the likelihood that a
vertex belongs to a high-quality independent set while reducing the impact
of high-energy noise-dominated measurements.

The estimated probabilities are then used to guide classical graph
reduction. Vertices with inclusion probability exceeding a confidence
threshold $\tau$ (set to $\tau=0.9$; Table~\ref{tab:settings}) are selected
directly. The remaining vertices are ranked using
\begin{equation}
\mathrm{score}(v)=\frac{p_v}{\deg(v)+1},
\label{eq:score}
\end{equation}
which balances the probability of selecting a vertex against the number of
vertices removed when it is included. Vertices are greedily selected in
descending score order while maintaining the independent-set constraint.

The score function admits an expected-gain interpretation. Let $I^{*}$
denote a target maximum independent set and define the indicator variable
\begin{equation}
X_v =
\begin{cases}
1, & v\in I^{*},\\
0, & \text{otherwise}.
\end{cases}
\end{equation}
Interpreting the estimated marginal probability as the probability that
vertex $v$ belongs to the target solution gives
\[
\mathbb{E}[X_v]=p_v .
\]
Selecting vertex $v$ removes its closed neighborhood
$N[v]$, containing
\[
|N[v]|=\deg(v)+1
\]
vertices from further consideration. Therefore, the expected contribution
per eliminated vertex is
\begin{equation}
\mathbb{E}[\Delta]
=
\frac{\mathbb{E}[X_v]}{|N[v]|}
=
\frac{p_v}{\deg(v)+1},
\end{equation}
which corresponds directly to Eq.~\eqref{eq:score}. Hence, the greedy
selection rule can be interpreted as maximizing the expected independent-set
contribution per unit of graph reduction.

The reliability of the estimated marginals depends on the number of retained
low-energy samples. Let $\hat{p}_v$ denote the empirical marginal estimated
from $N_{\mathcal{B}}$ samples. Since $\hat{p}_v$ is an empirical average of
bounded binary variables, Hoeffding's inequality gives
\begin{equation}
\Pr\!\left[|\hat{p}_v-p_v|\geq\epsilon\right]
\leq
2\exp(-2N_{\mathcal{B}}\epsilon^2).
\end{equation}
Applying a union bound over all $n=|V'|$ kernel vertices, all marginals are
estimated within an error $\epsilon$ with confidence $1-\delta$ provided
\begin{equation}
N_{\mathcal{B}}
\geq
\frac{1}{2\epsilon^2}
\ln\frac{2n}{\delta}.
\label{eq:samplecomplexity}
\end{equation}
Because the score function divides the marginal by
$\deg(v)+1\geq1$, the corresponding score estimates inherit the same
uniform accuracy bound. Thus, a sufficiently large measurement budget enables a reliable estimation of the $p_v$ produced by QAOA without explicit error mitigation \cite{Barron2024ProvableBounds}.

The measured marginals exhibit strong non-uniformity, indicating that the
quantum circuit provides informative vertex-selection probabilities rather
than random assignments. Figure~\ref{fig:marginals} shows the marginal
probabilities obtained during the first reduction round of QOBLIB instance
\texttt{es60fst02}. The probabilities range from values above $0.8$ to
almost zero, producing a clear separation between frequently and rarely
selected vertices. In contrast, uniform random sampling would assign
$p_i=0.5$ to all vertices. This separation provides the basis for the
quantum-guided reduction strategy.

\begin{figure}[t]
\centering
\includegraphics[width=\columnwidth]{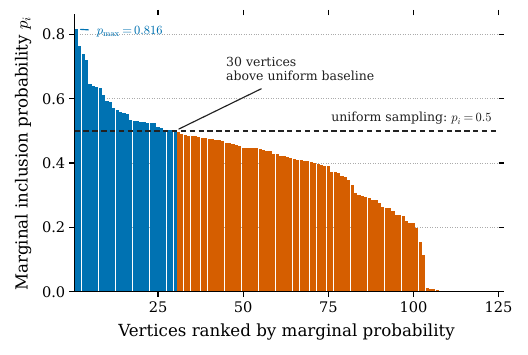}
\caption{QAOA marginal inclusion probabilities $p_i$ for the first reduction
round of \texttt{es60fst02} (124-vertex kernel), sorted in decreasing order.
The dashed horizontal line marks the value $p_i=0.5$ expected from uniform
random sampling; the $30$ vertices lying above this baseline (blue) carry the
selection signal, with the largest marginal reaching $p_{\max}=0.816$. The
non-uniform distribution of marginals provides the vertex-selection
information exploited by the proposed reduction strategy.}
\label{fig:marginals}
\end{figure}

The complete hybrid optimization procedure is iterative. Let
$G^{(0)}=(V^{(0)},E^{(0)})$ denote the input graph. At each iteration,
deterministic classical reduction rules are first applied to remove vertices
that can be safely selected or discarded. The remaining connected
components are then processed using the hardware-aware QAOA workflow and the
quantum-guided selection procedure described above. After selecting a set of
vertices $S^{(r)}$, the selected vertices and their neighbours are removed:
\[
G^{(r+1)}
=
G^{(r)}
\setminus
N[S^{(r)}],
\qquad
N[S]=S\cup\bigcup_{v\in S}N(v).
\]
The process is repeated until the graph becomes empty or the maximum
iteration limit is reached. The final independent set is obtained by
combining all vertices selected during deterministic reductions and
quantum-guided iterations. The complete procedure is summarized in
Algorithm~\ref{alg:hybrid}.

\begin{algorithm}[t]
\caption{Hybrid quantum-guided MIS solver}
\label{alg:hybrid}
\begin{algorithmic}[1]
\Require graph $G^{(0)}=(V,E)$; max iterations $R$; shots $N_{\mathrm{shots}}$;
         low-energy fraction $f$; confidence threshold $\tau$
\Ensure independent set $S_{\mathrm{total}}$
\State $S_{\mathrm{total}} \gets \emptyset$;\quad $r \gets 0$
\While{$V^{(r)} \neq \emptyset$ \textbf{and} $r < R$}
  \State $(G',S_{\mathrm{cl}},R_{\mathrm{cl}}) \gets \textsc{Reduce}(G^{(r)})$
  \State $S_{\mathrm{total}} \gets S_{\mathrm{total}} \cup S_{\mathrm{cl}}$
  \If{$V(G')=\emptyset$}
     \State \textbf{break}
  \EndIf
  \ForAll{connected components $G'_i$ of $G'$}
    \State build hardware-aware QAOA circuit for $G'_i$
    \State optimize $\boldsymbol{\theta}$ classically
    \State sample $N_{\mathrm{shots}}$ bitstrings on QPU
    \State retain low-energy fraction $f$ and estimate marginals $p_i$
    \State select vertices using $p_v/(\deg(v)+1)$ with threshold $\tau$
  \EndFor
  \State update graph by removing selected vertices and neighbors
  \State $r\gets r+1$
\EndWhile
\State \Return $S_{\mathrm{total}}$
\end{algorithmic}
\end{algorithm}

The algorithm terminates when either $V^{(r)}=\emptyset$ or the iteration
limit is reached. In practice, most instances are resolved before reaching
the maximum iteration count because deterministic reductions and
quantum-guided selections progressively decrease the graph size.

\subsection{Uniform Baseline for Quantum Sampling}

To quantify the contribution of the quantum-guided vertex selection mechanism, we introduce a uniform sampling variant of the proposed hybrid MIS solver. This experiment is designed as a controlled ablation study in which the quantum sampling stage is replaced by uniformly random vertex priorities while all other components of the algorithm remain unchanged. The objective is to isolate the information provided by the quantum stage and determine whether the quantum-generated signal offers meaningful guidance beyond uniform selection.

The uniform sampling baseline is therefore not intended as a comparison against competitive classical MIS heuristics. Replacing the quantum component with a specialized classical method, such as degree-based greedy selection or tabu search, would instead evaluate the relative performance of different node-selection strategies rather than the contribution of quantum information itself. The absolute performance of the proposed hybrid solver is evaluated separately against exact CPLEX solutions (cf.~Sections~
\ref{subsec:results-benchmark} and~\ref{subsec:results-gas}). Thus, this comparison should be interpreted as an ablation of the quantum-guided component rather than a benchmark against state-of-the-art classical MIS algorithms.

The uniform sampling baseline preserves the same iterative framework as the proposed hybrid method. At each iteration, the current graph is first processed using the identical classical reduction rules. Vertices that can be deterministically included in the independent set or safely removed are handled accordingly. If no vertices remain after reduction, the algorithm terminates. Otherwise, the quantum optimization stage is replaced by a randomized greedy selection procedure. Each remaining vertex is assigned an independent priority sampled from a uniform distribution. The vertex with the highest remaining priority is selected and added to the independent set if none of its neighbors has already been included. The selected vertex and all of its neighbors are then removed from the graph, and the reduction and randomized selection steps are repeated until the graph becomes empty. The final independent set is obtained by combining the vertices selected during the deterministic reduction stage with those selected through the randomized procedure.

Because the classical preprocessing, graph updates, and greedy selection mechanism are identical in both approaches, the only difference lies in the source of the vertex-selection information: quantum-generated samples in the proposed solver and uniformly random priorities in the ablation variant. Therefore, differences in solution quality can be attributed to the additional information provided by the quantum optimization stage. To account for the stochastic nature of the randomized baseline, it is executed using multiple random seeds, and the resulting distribution of independent-set sizes is compared with that obtained from the proposed hybrid solver.

The parameters governing the QUBO construction, quantum sampling, and
hybrid iteration are summarized in Table~\ref{tab:settings}. Unless stated otherwise, these values were held fixed across all reported experiments.

\subsection{Quantum Optimization Benchmarking Library (QOBLIB) Benchmark Instances}

The proposed hybrid MIS solver was evaluated on instances from QOBLIB~\cite{koch2026qoblib}. 
QOBLIB defines ten classically hard combinatorial optimization problem classes, including the Maximum Independent Set (MIS) problem, along with corresponding problem instances and reference solutions to enable fair, reproducible benchmarking. 
We selected 15 MIS instances from QOBLIB, varying in size and structure; see Table~\ref{tab:combined_results}.

\begin{table}[t]
\centering
\caption{Solver and quantum-execution settings used throughout the
experiments (unless stated otherwise).}
\label{tab:settings}
\begin{tabular}{lc}
\toprule
\textbf{Parameter} & \textbf{Value} \\
\midrule
QUBO penalty parameter $P$                     & $2$ \\
Shots per circuit $N_{\mathrm{shots}}$         & $20{,}000$ \\
Low-energy fraction retained for marginals $f$  & $0.5$ \\
Direct-selection confidence threshold $\tau$   & $0.9$ \\
Maximum hybrid iterations                      & $25$ \\
Random-baseline runs $N_{\mathrm{R}}$          & $10$--$40$ \\
$p=1$ grid-search resolution                   & $40\times40$ \\
COBYLA maximum iterations                      & $500$ \\
COBYLA initial trust-region radius $\rho$      & $0.05$ \\
Pauli-propagation weight cutoff                & $6$ \\
Pauli-propagation coefficient cutoff           & $10^{-3}$ \\
\bottomrule
\end{tabular}
\end{table}

\subsection{Pairwise Natural Gas Contract-Compatibility Abstraction}
\label{subsec:gas}

The proposed hybrid MIS solver is evaluated on a natural gas pipeline transportation contract-selection problem. The objective is to identify the largest set of mutually compatible transportation contracts that can be executed simultaneously. This problem is formulated as a Maximum Clique problem on a compatibility graph, where vertices represent contracts and edges connect pairwise compatible contracts. Since a maximum clique in a graph is equivalent to a maximum independent set in its complement, the proposed solver is applied to the complement graph.

Each transportation contract is characterized by its type (firm or interruptible), transportation capacity, transportation rate, receipt and delivery locations, pipeline segments, and delivery schedule. Synthetic benchmark instances are generated by randomly assigning one to four pipeline segments and delivery weeks to each contract while varying an overlap factor. Increasing the overlap factor increases competition for pipeline resources, producing denser compatibility graphs with more challenging conflict structures.

Two contracts are considered pairwise compatible if they can be executed simultaneously without violating the graph-generation rules, which account for temporal overlap, shared pipeline infrastructure, pairwise capacity-screening thresholds, contract priorities, and simulated flow conflicts. An edge is added between every compatible pair, yielding a compatibility graph whose complement defines the MIS instance solved by the proposed algorithm.

The resulting graph is a \emph{pairwise abstraction} of the underlying transportation problem. It does not explicitly enforce cumulative segment--period capacity constraints of the form
\[
\sum_{i\in S} a_{ist}\le C_{st},
\]
where $a_{ist}$ denotes the load contributed by contract $i$ on segment $s$ during period $t$, and $C_{st}$ is the corresponding segment capacity. Consequently, a clique in the compatibility graph may still violate aggregate capacity limits when three or more individually compatible contracts collectively exceed the available capacity. The industrial case study therefore evaluates the proposed algorithm on synthetic pairwise compatibility graphs, and the reported solutions should be interpreted as optimal with respect to the pairwise compatibility model rather than as fully validated pipeline schedules. Details of the benchmark generation process are provided in Appendix~\ref{app:benchmark_generation}.

The following proposition characterizes precisely when this pairwise abstraction is an exact representation of the underlying feasibility structure.

\begin{proposition}[Exactness of the pairwise compatibility graph]
\label{prop:exactness}
Let $\mathcal{F}$ denote the family of feasible contract subsets, and suppose feasibility is hereditary (downward closed), with every singleton contract feasible. Construct a compatibility graph $G_c=(V,E_c)$ in which $(i,j)\in E_c$ if and only if the pair $\{i,j\}$ is feasible. Then the cliques of $G_c$ coincide exactly with $\mathcal{F}$ if and only if every infeasible contract set contains an infeasible pair.
\end{proposition}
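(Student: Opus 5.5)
The plan is to prove both directions directly from the definitions. Let $\mathcal{C}(G_c)$ denote the family of cliques of $G_c$, with the empty set and all singletons counted as cliques. First I will establish the inclusion $\mathcal{F}\subseteq\mathcal{C}(G_c)$ unconditionally. Take $S\in\mathcal{F}$. By heredity every two-element subset $\{i,j\}\subseteq S$ is feasible, so $(i,j)\in E_c$ by construction, and $S$ is a clique. Singletons and the empty set are handled by the assumption that every singleton is feasible together with heredity. Heredity also gives $\emptyset\in\mathcal{F}$ whenever some singleton is feasible. The proposition therefore reduces to the claim that $\mathcal{C}(G_c)\subseteq\mathcal{F}$ holds if and only if every infeasible set contains an infeasible pair.

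For the ``if'' direction, I will argue by contraposition. Suppose a clique $K$ of $G_c$ were infeasible. By hypothesis $K$ contains an infeasible pair $\{i,j\}$, so $(i,j)\notin E_c$. This contradicts $K$ being a clique. Hence every clique is feasible, and the two families coincide.

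For the ``only if'' direction, I assume $\mathcal{C}(G_c)=\mathcal{F}$ and take an infeasible set $T$. Then $T\notin\mathcal{C}(G_c)$, so $T$ is not a clique. Singletons and $\emptyset$ are cliques, which they are by convention, so this means $|T|\ge 2$ and there exist distinct $i,j\in T$ with $(i,j)\notin E_c$. By the definition of $E_c$, the pair $\{i,j\}$ is infeasible, which is exactly the required infeasible pair inside $T$.

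The main point requiring care is bookkeeping rather than any real obstacle. The trivial cases $|S|\le1$ must be placed consistently on both sides, which is where the singleton-feasibility assumption enters. Heredity is used only in the unconditional inclusion. I would close by remarking that the cumulative capacity constraint $\sum_{i\in S}a_{ist}\le C_{st}$ can violate the condition: three pairwise-compatible contracts may jointly exceed $C_{st}$. This illustrates why the abstraction in Section~\ref{subsec:gas} is in general only a relaxation.
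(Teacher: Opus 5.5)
Your proof is correct and follows the paper's argument: heredity gives $\mathcal{F}\subseteq\mathcal{C}(G_c)$, the pair hypothesis makes every clique feasible, and conversely an infeasible set cannot be a clique, so it contains a non-adjacent, hence infeasible, pair. The only differences are that you prove the unconditional inclusion up front and state the $|S|\le 1$ conventions explicitly. One small correction: singleton feasibility is not needed for $\mathcal{F}\subseteq\mathcal{C}(G_c)$, since singletons are cliques by convention; it matters only for $\mathcal{C}(G_c)\subseteq\mathcal{F}$, and there it already follows from the hypothesis in each direction.
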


\begin{proof}
($\Leftarrow$) Assume every infeasible set contains an infeasible pair. If $S$ is a clique of $G_c$, then every pair in $S$ is feasible, so $S$ contains no infeasible pair and must therefore be feasible. Conversely, if $S\in\mathcal{F}$, hereditary feasibility implies that every pair of contracts in $S$ is feasible, so every pair forms an edge of $G_c$ and $S$ is a clique.

($\Rightarrow$) Assume the cliques of $G_c$ coincide with $\mathcal{F}$. If an infeasible set $S$ contained only feasible pairs, then every pair of vertices in $S$ would be adjacent, making $S$ a clique and hence feasible, a contradiction. Therefore every infeasible set must contain an infeasible pair.
\end{proof}

Proposition~\ref{prop:exactness} shows that the clique (equivalently, MIS) formulation is exact only when infeasibility arises solely from pairwise conflicts. It is no longer exact when higher-order resource constraints, such as cumulative segment--period capacity limits, render a set of pairwise compatible contracts jointly infeasible.

Rather than a shortcoming of the overall selection workflow, this motivates the intended role of the pairwise MIS solver as the first stage of a two-stage screening procedure. The purpose of the first stage is not to certify a fully feasible operating schedule, but to narrow an intractably large selection space---exponentially many contract subsets---down to a small family of mutually compatible, high-cardinality candidate sets that can then be checked against the cumulative constraints. In this second stage, each candidate independent set $S$ is verified against the cumulative segment--period capacity constraints $\sum_{i\in S} a_{ist}\le C_{st}$ for every pipeline segment $s$ and period $t$. This verification is inexpensive: it is a linear arithmetic check over the retained contracts rather than a combinatorial search. Candidate sets that satisfy all aggregate limits are directly feasible, whereas those that violate a segment--period limit can be repaired by removing the smallest-value conflicting contracts or by re-solving the MIS on the residual graph. This decomposition confines the computationally hard combinatorial search---where the quantum-guided solver is applied---to the pairwise-compatibility layer, while the aggregate-capacity constraints are enforced by the cheap downstream verification stage.

For this two-stage decomposition to be sound in the sense of discarding no truly feasible selection, the pairwise incompatibility rule must be a \emph{relaxation} of the underlying feasibility structure: a contract pair should be declared incompatible only when that pair on its own already violates a capacity or operational limit, which is a necessary condition for the joint feasibility of any set containing it. Under such a rule every feasible set remains pairwise compatible, so the compatibility graph over-approximates the feasible family and the first stage introduces no false negatives; the cumulative-capacity check in the second stage then removes the surviving false positives. We note that the synthetic generator used in this study instead applies a conservative pairwise capacity screen with a fractional threshold $\theta(\alpha)C_s<C_s$ (Appendix~\ref{app:benchmark_generation}), which is a proxy chosen to control graph density rather than a strict relaxation; a deployment aiming at the soundness guarantee above would set the pairwise rule accordingly. A full evaluation of the complete two-stage pipeline, including cumulative-capacity verification and hydraulic constraints on realistic network data, is left to future work.

\section{Results and Discussion}\label{sec:results}
The experimental evaluation of the proposed hybrid quantum--classical framework is organized into five parts. First (Section~\ref{subsec:results-benchmark}), we present benchmark results across fifteen diverse QOBLIB instances, demonstrating solution quality and comparing against the uniform sampling baseline. Second and third, we provide detailed case studies of the iterative hybrid optimization process on two representative benchmarks---C125-9 and \texttt{es60fst02}---illustrating the synergy between classical reduction and quantum-guided optimization and analyzing the statistical properties of the resulting solution distributions. Fourth (Section~\ref{subsec:runtime}), we report the runtime, resource usage, and run-to-run reproducibility of the solver on the QOBLIB instances. Fifth (Section~\ref{subsec:results-gas}), we evaluate performance on large-scale synthetic compatibility graphs representing natural gas contract-selection problems, validating practical applicability at industrially relevant scales. Characteristics of the transpiled QAOA circuits are summarized separately in Appendix~\ref{app:circuits}.

\begin{table*}[t]
\centering
\renewcommand{\arraystretch}{1.3}
\setlength{\tabcolsep}{5pt}
\caption{Comparison of uniform sampling and the proposed quantum--classical method across the benchmark instances. ``Best Approx. Ratio Run'' denotes the number of runs that achieved the method's best observed MIS size, reported as a fraction of the total number of runs. The approximation ratio is defined as the best MIS size obtained by the method divided by the CPLEX-optimal MIS size.}
\label{tab:combined_results}
\resizebox{\textwidth}{!}{%
\begin{tabular}{l r r r r c c c c}
\toprule
& & & & &
  \multicolumn{2}{c}{\textbf{Uniform Sampling}} &
  \multicolumn{2}{c}{\textbf{Quantum--Classical}} \\
\cmidrule(lr){6-7}\cmidrule(lr){8-9}
\textbf{Instance} & \textbf{Nodes} & \textbf{Edges} & \textbf{CPLEX} & \textbf{Runs} &
\makecell{\textbf{Best\ Approx.}\\\textbf{Ratio Run}} &
\makecell{\textbf{Approx.}\\\textbf{Ratio}} &
\makecell{\textbf{Best\ Approx.}\\\textbf{Ratio Run}} &
\makecell{\textbf{Approx.}\\\textbf{Ratio}} \\
\midrule
karate                    & 34  & 78   & 20 & 10 & 10/10 & 1.00 & 10/10 & 1.00 \\
farm                      & 17  & 39   & 10 & 10 & 6/10  & 1.00 & 10/10 & 1.00 \\
football                  & 35  & 118  & 16 & 10 & 2/10  & 1.00 & 10/10 & 1.00 \\
ibm32                     & 32  & 90   & 13 & 10 & 2/10  & 0.92 & 8/10  & 1.00 \\
es60fst01                 & 123 & 159  & 60 & 10 & 10/10 & 1.00 & 10/10 & 1.00 \\
chesapeake                & 39  & 170  & 17 & 10 & 2/10  & 1.00 & 2/10  & 1.00 \\
sloane\_1dc\_64           & 64  & 543  & 10 & 20 & 1/20  & 1.00 & 1/20  & 1.00 \\
insecta-ant-colony1-day38 & 56  & 1134 & 6  & 10 & 3/10  & 1.00 & 9/10  & 1.00 \\
sloane\_2dc\_128          & 128 & 5173 & 5  & 10 & 6/10  & 1.00 & 5/10  & 1.00 \\
es60fst03                 & 113 & 142  & 55 & 10 & 2/10  & 1.00 & 5/10  & 1.00 \\
es60fst04                 & 162 & 238  & 78 & 10 & 1/10  & 1.00 & 1/10  & 1.00 \\
sloane\_1dc\_128          & 128 & 1471 & 16 & 10 & 4/10  & 0.94 & 5/10  & 0.94 \\
sloane\_1zc\_128          & 128 & 1120 & 18 & 10 & 1/10  & 1.00 & 1/10  & 1.00 \\
es60fst02                 & 186 & 280  & 88 & 20 & 5/20  & 0.99 & 5/20  & 1.00 \\
C125-9                    & 125 & 787  & 34 & 40 & 1/40  & 0.94 & 2/40  & 1.00 \\
\bottomrule
\end{tabular}%
}
\end{table*}

\subsection{Benchmark Performance and Comparison with the Randomized Baseline}
\label{subsec:results-benchmark}

Table~\ref{tab:combined_results} evaluates 15 benchmark instances (17--186 vertices, 39--5{,}173 edges), comparing the exact CPLEX optimum against a uniform-sampling baseline and the proposed hybrid quantum--classical (Q--C) solver under identical run budgets.

The Q--C solver achieves an average best-solution approximation ratio of $\mathbf{0.996}$, recovering the CPLEX optimum on 14 of 15 instances ($93.3\%$). The sole exception is \texttt{sloane\_1dc\_128} (best MIS of 15 vs.\ optimum 16, ratio $0.938$). Optimal solutions are also recovered on larger, harder instances such as \texttt{es60fst02} and \texttt{C125-9} (up to 186 vertices), indicating that quantum guidance remains useful even when classical reduction leaves a substantial residual search space.

By contrast, uniform sampling fails to reach the optimum on four instances---\texttt{ibm32}, \texttt{sloane\_1dc\_128}, \texttt{es60fst02}, and \texttt{C125-9}---with ratios of $0.92$, $0.94$, $0.99$, and $0.94$, respectively. The Q--C solver closes this gap on three of the four (\texttt{ibm32}, \texttt{es60fst02}, \texttt{C125-9}), matching uniform sampling's $0.94$ only on \texttt{sloane\_1dc\_128}.

Search consistency also improves: across 200 total runs, the Q--C solver recovers its best observed solution in $42\%$ of runs versus $28\%$ for uniform sampling---a $50\%$ relative increase, with gains on 7 of 15 instances. The largest improvements are on \texttt{football} ($2/10 \to 10/10$), \texttt{farm} ($6/10 \to 10/10$), \texttt{ibm32} ($2/10 \to 8/10$), and \texttt{insecta-ant-colony1-day38} ($3/10 \to 9/10$); smaller gains appear for \texttt{es60fst03}, \texttt{sloane\_1dc\_128}, and \texttt{C125-9}. Frequencies are unchanged on six instances, and only \texttt{sloane\_2dc\_128} declines ($6/10 \to 5/10$).

The solution distributions show the clearest benefit. For \texttt{ibm32}, the Q--C solver reaches the optimum in 8/10 runs versus 2/10 for uniform sampling. For \texttt{es60fst02}, uniform sampling never reaches the optimum of 88, while Q--C does so in 5/20 runs. The largest shift is on \texttt{C125-9}: uniform sampling has a mean MIS of 28.05 and reaches size $\geq 31$ in only $5\%$ of runs, while Q--C raises the mean to 30.45, reaches size $\geq 31$ in $60\%$ of runs, and hits the optimum of 34 twice.

On easier instances---e.g., \texttt{karate} and \texttt{es60fst01}, where both methods reach the optimum every run---quantum guidance offers little added benefit, since uniform priorities already suffice.

Overall, classical reduction removes most of the search space, and the QPU supplies marginal probability information that biases the residual classical search toward promising regions. This raises the aggregate best-solution recovery frequency from $28\%$ to $42\%$ and the average approximation ratio from $0.986$ to $\mathbf{0.996}$, converting three previously suboptimal uniform-sampling cases into optimal ones. The benefit is most apparent when classical reduction leaves a sufficiently difficult residual problem.

\subsection{Detailed Case Study: C125-9 Benchmark}

\begin{table*}[t]
\centering
\caption{Round-wise evolution of the hybrid quantum--classical MIS solver on the C125-9 graph.
``After CR'' denotes the graph remaining after the classical reduction (CR) stage.
$\Delta\mathrm{MIS}_{\mathrm{C}}$ and $\Delta\mathrm{MIS}_{\mathrm{Q}}$ are the numbers of MIS vertices contributed by the classical reduction and quantum stages, respectively, in the current round.
``Neighbors Removed'' denotes the additional vertices eliminated because they are adjacent to the quantum-selected vertices.
``Net Reduction'' is the total number of vertices eliminated during the round, including classical reductions, selected MIS vertices, and their conflicting neighbors.
``Cumulative MIS'' records the number of vertices selected up to and including the current round.}
\label{tab:C125-9_results}
\small
\setlength{\tabcolsep}{3.5pt}
\begin{tabular}{c r r r r r r r r}
\toprule
\textbf{Round} &
\makecell{\textbf{Start}\\\textbf{Nodes}} &
\makecell{\textbf{After}\\\textbf{CR}} &
$\Delta\mathrm{MIS}_{\mathrm{C}}$ &
$\Delta\mathrm{MIS}_{\mathrm{Q}}$ &
\makecell{\textbf{Neighbors}\\\textbf{Removed}} &
\makecell{\textbf{Net}\\\textbf{Reduction}} &
\makecell{\textbf{Final}\\\textbf{Nodes}} &
\makecell{\textbf{Cumulative}\\\textbf{MIS}} \\
\midrule
1  & 125 & 125 & 0 & 1 & 6 & 7  & 118 & 1  \\
2  & 118 & 118 & 0 & 1 & 3 & 4  & 114 & 2  \\
3  & 114 & 114 & 0 & 1 & 5 & 6  & 108 & 3  \\
4  & 108 & 108 & 0 & 1 & 4 & 5  & 103 & 4  \\
5  & 103 & 103 & 0 & 1 & 5 & 6  & 97  & 5  \\
6  & 97  & 97  & 0 & 1 & 5 & 6  & 91  & 6  \\
7  & 91  & 91  & 0 & 1 & 5 & 6  & 85  & 7  \\
8  & 85  & 85  & 0 & 1 & 4 & 5  & 80  & 8  \\
9  & 80  & 80  & 0 & 1 & 6 & 7  & 73  & 9  \\
10 & 73  & 73  & 0 & 1 & 4 & 5  & 68  & 10 \\
11 & 68  & 68  & 0 & 1 & 3 & 4  & 64  & 11 \\
12 & 64  & 64  & 0 & 1 & 2 & 3  & 61  & 12 \\
13 & 61  & 61  & 0 & 1 & 4 & 5  & 56  & 13 \\
14 & 56  & 56  & 0 & 1 & 5 & 6  & 50  & 14 \\
15 & 50  & 50  & 0 & 1 & 2 & 3  & 47  & 15 \\
16 & 47  & 46  & 1 & 1 & 5 & 7  & 40  & 17 \\
17 & 40  & 37  & 3 & 1 & 6 & 10 & 30  & 21 \\
18 & 30  & 26  & 4 & 1 & 8 & 13 & 17  & 26 \\
19 & 17  & 16  & 1 & 1 & 4 & 6  & 11  & 28 \\
20 & 11  & 11  & 0 & 1 & 2 & 3  & 8   & 29 \\
21 & 8   & 3   & 5 & 0 & 3 & 8  & 0   & 34 \\
\bottomrule
\end{tabular}
\end{table*}

\subsubsection{Iterative Hybrid Optimization Process}

Table~\ref{tab:C125-9_results} presents the round-by-round execution of the proposed hybrid quantum--classical MIS solver on the C125-9 benchmark. Starting from the original graph containing 125 vertices, the algorithm iteratively applies classical reduction, executes the quantum MIS solver on the reduced graph, and removes the selected MIS vertex together with its neighbors. This process continues until the graph is completely eliminated after 21 rounds.

During the first fifteen rounds, the classical reduction stage is unable to identify any mandatory MIS vertices, resulting in $\Delta$MIS$_C=0$ throughout this phase. Consequently, the reduction in graph size is driven entirely by the quantum solver, which consistently contributes one vertex to the independent set in each iteration ($\Delta$MIS$_Q=1$). Depending on the degree of the selected vertex, between two and six neighboring vertices are removed in each round, reducing the graph from 125 vertices to 47 vertices while constructing an independent set of size 15.

As the graph becomes smaller, the classical reduction stage becomes increasingly effective. From Round 16 onward, classical reductions contribute additional MIS vertices with $\Delta$MIS$_C$ values of 1, 3, 4, 1, and 5, respectively. These reductions substantially accelerate graph simplification by identifying vertices that are guaranteed to belong to the maximum independent set without requiring further quantum optimization. The largest reduction occurs in Round 18, where four vertices are identified through classical reduction and one additional vertex is selected by the quantum solver. Together with the removal of eight neighboring vertices, the graph size decreases from 30 to 17 vertices in a single iteration. In the final round, the classical reduction identifies the remaining five MIS vertices, leaving a residual graph of three vertices that is eliminated without invoking the quantum solver ($\Delta$MIS$_Q=0$).

The hybrid solver constructs a maximum independent set of size 34 after 21 iterations while completely eliminating the graph. CPLEX verification confirms this is the optimal solution, demonstrating the complementary roles of the quantum and classical components: the quantum solver provides steady progress when classical reductions are initially ineffective, while the classical reduction stage becomes increasingly powerful as the graph shrinks.

\subsubsection{Statistical Analysis of Solution Quality}

\begin{table}[t]
\centering
\small
\setlength{\tabcolsep}{4pt}
\caption{Comparison of the solution quality achieved by the uniform-baseline solver and the optimized quantum solver over repeated executions on the C125-9 instance. In the first iteration, the optimized quantum circuit contained 35 controlled-\(Z\) (CZ) gates with a CZ-gate depth of 642.}
\label{tab:mis_summary}
\begin{tabular}{lcc}
\toprule
\textbf{Metric} &
\makecell{\textbf{Uniform}\\\textbf{Baseline}} &
\makecell{\textbf{Quantum}\\\textbf{-Classical}} \\
\midrule
Mean MIS Size              & 28.05 & 30.45 \\
Median MIS Size            & 28    & 31    \\
$P(\mathrm{MIS}\geq31)$ (\%) & 5.0   & 60.0  \\
$P(\mathrm{MIS}\geq32)$ (\%) & 2.5   & 35.0  \\
Best Observed MIS Size     & 32    & 34    \\
\midrule
Cliff's $\delta$            & \multicolumn{2}{c}{0.649} \\
\bottomrule
\end{tabular}
\end{table}

\begin{figure*}[t]
    \centering
    \begin{minipage}[b]{0.48\linewidth}
        \centering
        \includegraphics[width=\linewidth]{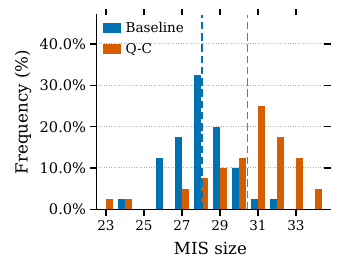}\\[2pt]
        (a)
    \end{minipage}
    \hfill
    \begin{minipage}[b]{0.48\linewidth}
        \centering
        \includegraphics[width=\linewidth]{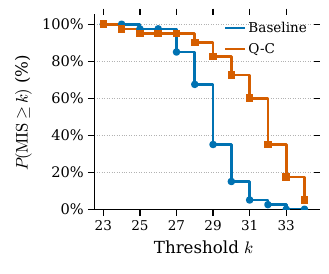}\\[2pt]
        (b)
    \end{minipage}
     \caption{Comparison of the uniform-random baseline (Baseline) and the hybrid quantum--classical (Q--C) solver on the distribution of maximum independent set (MIS) sizes for \texttt{C125-9}. (a) Histograms of the MIS sizes returned by the two methods over repeated runs; dashed vertical lines mark the respective mean MIS sizes. (b) Tail probability $P(\mathrm{MIS}\geq k)$---the fraction of runs attaining an MIS of size at least $k$---as a function of the threshold $k$.}
    \label{fig:mis-comparison}
\end{figure*}

\begin{table}[ht]
  \centering
  \small
  \setlength{\tabcolsep}{4pt}
  \caption{Tail probabilities $P(\mathrm{MIS} \geq k)$ for the uniform baseline and quantum solver on C125-9.}
  \label{tab:tail}
  \begin{tabular}{cccr}
    \hline
    \textbf{Threshold} &
    \makecell{\textbf{Uniform}\\\textbf{Baseline}} &
    \makecell{\textbf{Quantum}\\\textbf{-Classical}} &
    \textbf{Ratio (Q/R)} \\
    \textbf{$k$} & \textbf{$P(\mathrm{MIS}\geq k)$} &
    \textbf{$P(\mathrm{MIS}\geq k)$} & \\
    \hline
    31 & $5.0\%$  & $60.0\%$ & $12.0\times$ \\
    32 & $2.5\%$  & $35.0\%$ & $14.0\times$ \\
    33 & $0.0\%$  & $17.5\%$ & -- \\
    34 & $0.0\%$  & $5.0\%$  & -- \\
    \hline
  \end{tabular}
\end{table}

To evaluate the distributional advantage of quantum-guided optimization, we compare solution quality across repeated executions on C125-9. The random baseline ($N_{\mathrm{R}} = 40$ runs) and quantum algorithm ($N_{\mathrm{Q}} = 40$ runs, QAOA depth $p = 3$) produce markedly different distributions (Table~\ref{tab:mis_summary}, Figure~\ref{fig:mis-comparison}).

The quantum algorithm achieves substantially higher central tendency: a mean MIS size of $30.45$ versus $28.05$ for the random baseline (a difference of $2.40$ vertices), with median values of $31$ versus $28$. More significantly, the quantum approach exhibits pronounced advantages in the high-quality tail. At threshold $k = 31$ it is $12\times$ more likely to succeed ($60.0\%$ versus $5.0\%$), and this advantage widens to $14\times$ at $k = 32$ ($35.0\%$ versus $2.5\%$). At $k = 33$ and at the optimal size $k = 34$ the random baseline never succeeds, whereas the quantum algorithm still reaches these sizes in $17.5\%$ and $5.0\%$ of runs, respectively (Table~\ref{tab:tail}). Overall, the quantum algorithm places $60.0\%$ of its probability mass on solutions of size $31$ or larger, compared with only $5.0\%$ for the random baseline.

We quantify the separation between the two distributions with Cliff's
$\delta$, a nonparametric effect-size measure that requires no distributional
assumptions and is defined as
\begin{equation}
\delta = \frac{\bigl|\{(i,j): Q_i > R_j\}\bigr| - \bigl|\{(i,j): Q_i < R_j\}\bigr|}{N_{\mathrm{Q}}\cdot N_{\mathrm{R}}},
\end{equation}
where $Q_i$ and $R_j$ range over the $N_{\mathrm{Q}}$ quantum-classical and $N_{\mathrm{R}}$
uniform random MIS sizes. It equals the difference between the probability that a randomly
drawn quantum-classical solution exceeds a random baseline solution and the reverse
probability, so $\delta \in [-1, 1]$: values near $0$ indicate stochastically
identical distributions, while $|\delta| > 0.474$ conventionally denotes a large
effect~\cite{varghaDelaney2000}. Here $\delta = 0.649$ confirms a large effect size, indicating that the quantum-classical distribution stochastically dominates the random baseline to a substantial degree. In concrete terms, a randomly drawn quantum solution would be superior approximately 82.5\% of the time. These results provide strong evidence that quantum-guided optimization learns non-trivial search heuristics that bias sampling toward the high-quality region of the solution space, with the benefit most pronounced precisely where it matters most: at the frontier of best achievable solution quality.

\subsection{Detailed Case Study: \texttt{es60fst02} Benchmark}

\begin{table*}[tp]
\centering
\small
\setlength{\tabcolsep}{3.8pt}
\caption{Round-wise evolution of the proposed hybrid quantum--classical MIS solver on the \texttt{es60fst02} instance.
The \emph{Start Nodes} denote the number of vertices entering each reduction round, and \emph{After CR} denotes the residual graph after classical reduction (CR).
$\Delta\mathrm{MIS}_{\mathrm{C}}$ and $\Delta\mathrm{MIS}_{\mathrm{Q}}$ denote the MIS contributions from the classical and quantum stages, respectively.
\emph{Neighbors Removed} represents vertices eliminated because they conflict with vertices selected by the quantum stage.
The \emph{Final Nodes} are the vertices carried forward to the next round, while \emph{Net Reduction} denotes the total number of vertices eliminated during the round.
The final column reports the cumulative MIS size.}
\label{tab:graph_evolution}
\begin{tabular}{c r r r r r r r r}
\toprule
\textbf{Round} &
\makecell{\textbf{Start}\\\textbf{Nodes}} &
\makecell{\textbf{After}\\\textbf{CR}} &
$\Delta\mathrm{MIS}_{\mathrm{C}}$ &
$\Delta\mathrm{MIS}_{\mathrm{Q}}$ &
\makecell{\textbf{Neighbors}\\\textbf{Removed}} &
\makecell{\textbf{Net}\\\textbf{Reduction}} &
\makecell{\textbf{Final}\\\textbf{Nodes}} &
\makecell{\textbf{Cumulative}\\\textbf{MIS}} \\
\midrule
1 & 186 & 124 & 30 & 1 & 3 & 66 & 120 & 31 \\
2 & 120 & 120 & 0 & 1 & 2 & 3 & 117 & 32 \\
3 & 117 & 109 & 4 & 1 & 2 & 11 & 106 & 37 \\
4 & 106 & 102 & 2 & 1 & 2 & 7 & 99 & 40 \\
5 & 99 & 93 & 3 & 1 & 2 & 9 & 90 & 44 \\
6 & 90 & 77 & 7 & 1 & 2 & 16 & 74 & 52 \\
7 & 74 & 62 & 6 & 1 & 2 & 15 & 59 & 59 \\
8 & 59 & 57 & 1 & 1 & 3 & 6 & 53 & 61 \\
9 & 53 & 49 & 2 & 1 & 2 & 7 & 46 & 64 \\
10 & 46 & 0 & 24 & 0 & 0 & 46 & 0 & 88 \\
\bottomrule
\end{tabular}
\end{table*}

\begin{figure}[tbp]
  \centering
  \includegraphics[width=\columnwidth]{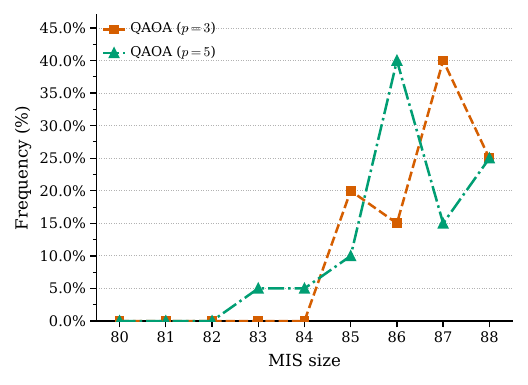}
  \caption{Distribution of independent set (MIS) sizes found on
  instance \texttt{es60fst02} by the QAOA-based solver at circuit depths
  $p = 3$ and $p = 5$ ($N = 20$ runs each). Frequencies are normalized within
  each configuration. The $p=3$ distribution peaks at $87$ ($40\%$ of runs)
  and the $p=5$ distribution at $86$ ($40\%$), and both reach the CPLEX
  optimum of $88$ in $25\%$ of runs; neither places any mass below $85$ ($p=3$)
  or $83$ ($p=5$). The uniform-random baseline over the same instance, which
  never exceeds size $87$, is tabulated in
  Table~\ref{tab:combined_results}.}
  \label{fig:mis_distribution}
\end{figure}

\subsubsection{Iterative Optimization Dynamics}

Table~\ref{tab:graph_evolution} presents the evolution of the hybrid solver on \texttt{es60fst02} (186 vertices). The algorithm achieves complete graph elimination after ten rounds, recovering the optimal MIS size of 88 (verified by CPLEX).

Classical reduction demonstrates exceptional effectiveness on this benchmark. The first round alone identifies 30 mandatory MIS vertices, reducing the graph from 186 to 124 vertices before quantum execution. Combined with quantum-guided selection of one additional vertex and neighbor removal, 66 vertices are eliminated in the first iteration. Classical reductions continue to dominate throughout: with the exception of Round 2, every iteration identifies additional mandatory vertices with $\Delta$MIS$_C$ values ranging from 1 to 24. The final iteration exemplifies this synergy---starting with 46 vertices, classical reduction identifies the remaining 24 MIS vertices without quantum invocation ($\Delta$MIS$_Q=0$), completing the solution.

This case study demonstrates the complementary strengths of the hybrid approach: classical reduction aggressively simplifies the graph and identifies a large fraction of the optimal solution (79 of 88 vertices), while quantum optimization resolves the remaining combinatorial ambiguity in only nine quantum executions.

\subsubsection{Impact of QAOA Depth on Solution Quality}

Figure~\ref{fig:mis_distribution} compares the distribution of maximum
independent set (MIS) sizes obtained on instance \texttt{es60fst02} by the
QAOA-based solver at circuit depths $p = 3$ and $p = 5$ ($N = 20$ runs each);
the uniform random baseline over the same instance ($N = 20$ runs, tabulated
in Table~\ref{tab:combined_results}) serves as the reference. Each distribution
is normalized to a relative frequency within its own set of runs.

The baseline distribution is centered near an MIS size of $85$ (mean $84.75$,
median $85.5$), with its upper mass split evenly between sizes $86$ and $87$
($25\%$ of runs each) and a lower tail extending down to $81$. Critically, it
never reaches the optimum of $88$, topping out at $87$. This behavior is
characteristic of an unbiased sampler, which spreads its probability mass
broadly over feasible configurations and only occasionally reaches high-quality
solutions.

In contrast, both QAOA configurations shift the distribution markedly toward
larger MIS sizes. The $p = 3$ solver peaks at a size of $87$ ($40\%$ of runs)
and returns no solution smaller than $85$, whereas the $p = 5$ solver peaks at
$86$ ($40\%$ of runs) with its support bounded below by $83$. Crucially, both
depths reach the optimal size of $88$ in $25\%$ of runs, whereas the baseline
never does. The mean MIS size rises to $86.70$ for $p = 3$ and $86.30$ for
$p = 5$, compared with $84.75$ for the baseline. The QAOA solver therefore both
raises the \emph{typical} solution quality (a larger mean and a mode shifted to
$86$--$87$ with a truncated lower tail) and materially increases the chance of
hitting the optimum, indicating that it concentrates probability mass on
high-quality solutions rather than sampling uniformly across the feasible
space.

These distributional differences are statistically significant. A
one-sided Mann--Whitney $U$ test confirms that both QAOA configurations
stochastically dominate the uniform random baseline: $p=3$ yields
$U=311.5$, $P=1.0\times10^{-3}$, with a large effect size (Cliff's
$\delta=0.557$), and $p=5$ yields $U=282.5$, $P=1.2\times10^{-2}$
(Cliff's $\delta=0.412$, medium-to-large). Bootstrap $95\%$ confidence
intervals on the mean MIS size ($20{,}000$ resamples) are
$[86.25,87.15]$ for $p=3$ and $[85.70,86.85]$ for $p=5$, both lying
entirely above the uniform random-baseline mean of $84.75$ (whose own interval,
$[83.85,85.60]$, does not reach either QAOA interval).

Increasing the circuit depth from $p = 3$ to $p = 5$ does not
yield a monotonic improvement in solution quality on this instance: the two
depths achieve the same probability of reaching size $88$ ($25\%$), and the
$p = 3$ configuration places more mass at size $87$ ($40\%$ versus $15\%$),
giving it a marginally higher mean. A plausible explanation is the increased noise arising from the deeper circuit, combined with the additional variational parameters at $p = 5$, which enlarge and roughen the optimization landscape---making the classical outer loop more susceptible to sub-optimal local minima within the available training budget. This
apparent difference is not statistically significant: a two-sided
Mann--Whitney test between the $p=3$ and $p=5$ distributions gives
$U=231.5$, $P=0.39$ (Cliff's $\delta=0.158$, negligible-to-small), and the
bootstrap $95\%$ confidence intervals on $P(\mathrm{MIS}\geq 88)$ are wide and
identical for the two depths ($[10\%,45\%]$ each), reflecting the small sample
size of $N=20$ runs. We therefore refrain from drawing conclusions about the
relative merits of $p=3$ versus $p=5$ from these data. The qualitative
conclusion---that the QAOA solver clearly outperforms the uniform baseline
on \texttt{es60fst02}---is robust, as the separation between the QAOA and
baseline distributions greatly exceeds this sampling uncertainty.

\subsection{Runtime and Reproducibility}
\label{subsec:runtime}

Table~\ref{tab:runtime} reports the resource usage and run-to-run
consistency of the hybrid solver on the QOBLIB instances. Each instance was
executed for a fixed number of independent runs, and we record how often the
solver recovered its own best solution across those runs as a measure of
reproducibility. Circuits were sampled on IBM superconducting processors of
the Heron (r2/r3) and Nighthawk (r1) families. The total wall-clock time is dominated by classical processing, including graph reduction, estimation of the required number of SWAP layers, seed selection to avoid excessively deep circuits, and QAOA parameter optimization using Pauli propagation. In contrast, QPU sampling requires only tens to a few hundred seconds per instance. Runtime grows with both the size of the irreducible kernel forwarded to
the QPU and the QAOA depth~$p$, which together govern the cost of the
classical Pauli-propagation parameter optimization; the largest classical
costs are accordingly incurred by the larger kernels (\texttt{sloane\_1zc\_128},
\texttt{sloane\_1dc\_128}, \texttt{C125-9}) and by the deeper $p=5$ circuit
(\texttt{sloane\_1dc\_64}). The hybrid solver reaches the certified optimum on fourteen of the fifteen instances; the sole exception is
\texttt{sloane\_1dc\_128}, whose best solution falls one vertex short of the
optimum.

Run-to-run reproducibility varies systematically with instance hardness. Small
or heavily reducible instances recover their best solution in every run (e.g.,
\texttt{karate}, \texttt{farm}, and \texttt{es60fst01}, each $10/10$), whereas the
largest and densest kernels are markedly less consistent: the solver attains its
own best reported value in only $1$ of $10$ runs for \texttt{es60fst04}, $1$ of
$10$ for \texttt{sloane\_1zc\_128}, and $2$ of $40$ for \texttt{C125-9}. This
variability reflects the stochastic nature of QAOA sampling on hard kernels and
is precisely why multiple independent runs are used: the best-of-$N$ values
reported in Table~\ref{tab:runtime} are obtained by exploiting this
run-to-run diversity, and the number of runs was increased for the harder
instances accordingly. The sole instance that never reaches the optimum is
\texttt{sloane\_1dc\_128}, which converges to a sub-optimal independent set in
all $10$ runs, recovering its own best value in $5$ of them, as discussed in
Section~\ref{subsec:results-benchmark}.

\begin{table*}[t]
\centering
\small
\setlength{\tabcolsep}{4pt}
\caption{Runtime and reproducibility summary for the QOBLIB benchmarks. The number of binary variables equals the number of graph
vertices. ``Times best found'' is the number of runs (out of the total) in
which the solver attained its own best reported solution. ``IBM QPU'' is the
superconducting backend used for sampling. Runtimes are wall-clock seconds}
\label{tab:runtime}
\begin{tabular}{l c r c c c c l r r}
\toprule
\textbf{Instance} &
\textbf{\makecell{Bin.\\vars}} &
\textbf{\makecell{Non-zero\\coef.}} &
\textbf{\makecell{Optimal\\MIS}} &
\textbf{\makecell{Hybrid\\MIS}} &
\textbf{Runs} &
\textbf{\makecell{Times\\best found}} &
\textbf{IBM QPU} &
\textbf{\makecell{CPU\\(s)}} &
\textbf{\makecell{QPU\\(s)}} \\
\midrule
karate                     & 34  & 112  & 20 & 20 & 10& 10 & Heron r3     & 254    & 7   \\
farm                       & 17  & 56   & 10 & 10 & 10& 10 & Heron r2     & 138    & 7   \\
football                   & 35  & 153  & 16 & 16 & 10& 10 & Heron r3     & 463    & 7   \\
ibm32                      & 32  & 122& 13 & 13 & 10& 8  & Nighthawk r1 & 8799   & 246 \\
es60fst01                  & 123 & 282  & 60 & 60 & 10& 10 & Heron r3     & 4275   & 28  \\
chesapeake                 & 39  & 209  & 17 & 17 & 10& 6  & Nighthawk r1 & 1643   & 164 \\
sloane\_1dc\_64            & 64  & 607  & 10 & 10 & 10& 1  & Heron r3     & 41498  & 24  \\
insecta-ant-colony1-day38  & 56  & 1190 & 6  & 6  & 10& 9  & Heron r3     & 12654  & 246 \\
sloane\_2dc\_128           & 128 & 5301 & 5  & 5  & 10& 5  & Nighthawk r1 & 3673   & 82  \\
es60fst03                  & 113 & 255  & 55 & 55 & 10& 5  & Heron r3     & 12372  & 64  \\
es60fst04                  & 162 & 400  & 78 & 78 & 10& 1  & Heron r3     & 21444  & 64  \\
sloane\_1dc\_128           & 128 & 1599 & 16 & 15 & 10& 5  & Nighthawk r1 & 54901  & 656 \\
sloane\_1zc\_128           & 128 & 1248& 18 & 18 & 10& 1  & Nighthawk r1 & 58745  & 820 \\
es60fst02                  & 186 & 466  & 88 & 88 & 20& 5  & Heron r2     & 23288  & 78  \\
C125-9                     & 125 & 912  & 34 & 34 & 40& 2  & Heron r2     & 46574  & 160 \\
\bottomrule
\end{tabular}
\end{table*}

\subsection{Performance on Large-Scale Pairwise Contract-Compatibility Graphs}
\label{subsec:results-gas}

\begin{table*}[t]
\centering
\small
\setlength{\tabcolsep}{6pt}
\renewcommand{\arraystretch}{1.15}
\caption{Performance of the proposed hybrid quantum--classical algorithm on synthetic compatibility graphs representing natural gas transportation contracts. The reduction percentage indicates the fraction of vertices eliminated during the classical preprocessing stage before quantum optimization.`No. of qubits' indicates the number of qubits in the quantum circuit during the first round of iterative reduction.}
\label{tab:performance}

\begin{tabular}{cccccccc}
\toprule
&
\multicolumn{3}{c}{\textbf{Graph Statistics}}
&
\multicolumn{4}{c}{\textbf{Solution Quality}}\\
\cmidrule(lr){2-4}
\cmidrule(lr){5-8}

\textbf{Graph}
&
\textbf{Contracts}
&
\makecell{\textbf{No. of}\\\textbf{qubits}}
&
\makecell{\textbf{Reduction}\\\textbf{(\%)}}
&
$\mathbf{MIS}_{\mathbf{hybrid}}$
&
$\mathbf{MIS}_{\mathbf{CPLEX}}$
&
\makecell{\textbf{Approx.}\\\textbf{Ratio}}
&
\makecell{\textbf{Baseline Solver}\\
$\mathbf{P(MIS \geq MIS_{CPLEX})}$ (\%)}\\

\midrule
1 & 280 &  23 & 91.8 &  58 &  58 & 1.000 & 35.5 \\
2 & 250 &  40 & 84.0 &  56 &  56 & 1.000 & 25.9 \\
3 & 500 &  63 & 87.4 &  79 &  79 & 1.000 & 7.4 \\
4 & 450 &  84 & 81.3 &  81 &  81 & 1.000 & 11.8 \\
5 & 800 & 108 & 86.5 &  98 & 100 & 0.980 & 1.1 \\
6 & 900 & 124 & 86.2 & 102 & 104 & 0.981 & 0.7 \\
\midrule
\textbf{Average}
& -- & -- & -- & -- & -- & \textbf{0.989} & -- \\
\bottomrule
\end{tabular}
\end{table*}

Table~\ref{tab:performance} summarizes the performance of the proposed
hybrid quantum--classical solver on six synthetic pairwise
contract-compatibility graphs containing between 250 and 900 contracts.
The benchmark instances are designed to emulate graph sizes, densities,
temporal overlaps, and shared-infrastructure patterns representative of
transportation contract-selection problems. The reported CPLEX and
hybrid solutions correspond to the generated MIS instances derived from
the pairwise compatibility model. Since compatibility is determined
solely through pairwise screening rules, the resulting independent sets
should be interpreted as optimal or near-optimal solutions to the
pairwise graph model rather than as fully validated pipeline operating
schedules under complete capacity constraints. Compared with the
benchmark instances in Section~\ref{subsec:results-benchmark}, these
industrial graphs are substantially larger and denser, reflecting the
complexity of realistic contract compatibility networks.

The classical reduction stage proves highly effective, eliminating an
average of 86.2\% of the vertices (Table~\ref{tab:performance}).
Consequently, the residual quantum kernels remain small---at most
124 vertices---allowing even the largest 900-contract instance to be
mapped onto the available quantum hardware.

Despite the increased problem complexity, the proposed hybrid solver
achieves an average approximation ratio of $\mathbf{0.989}$ and exactly
matches the CPLEX optimum on four of the six instances. For the two
largest graphs, the solver attains approximation ratios of 0.980 and
0.981, differing from the optimal solution by only two contracts in
each case, even though the corresponding quantum kernels contain more
than 100 vertices.

The uniform random baseline performs competitively only on the smallest
instances, and its success probability declines rapidly as the graph size
increases, falling below 1.1\% for the two largest (Table~\ref{tab:performance}).
In contrast, the proposed hybrid solver consistently maintains
approximation ratios exceeding 0.98, demonstrating that quantum-guided
vertex selection provides a substantial advantage over uninformed random
sampling for large-scale MIS instances.

Overall, these results demonstrate that the proposed framework scales
effectively to large industrial graph instances. The classical
preprocessing stage dramatically reduces the problem size while
preserving solution quality, leaving only compact residual kernels for
quantum optimization. The remaining optimality gap is therefore largely
attributable to the difficulty of the reduced quantum subproblems,
suggesting that continued advances in quantum hardware, circuit
optimization, and variational algorithms can further improve the quality
of hybrid quantum--classical MIS solvers.

It is worth clarifying the intended role of the quantum component in this
context. On the instance sizes considered here, exact solvers such as
CPLEX return provably optimal solutions and serve as our ground-truth
reference; we do not claim to surpass them in this regime. The
motivation is instead one of scalability: exact branch-and-cut solvers
scale exponentially in the worst case and become intractable on
sufficiently large and dense conflict graphs---for contract-selection
problems this regime is reached as the number of contracts grows into the
thousands. In that regime the hybrid framework remains applicable,
because classical reduction confines the quantum workload to a compact
kernel whose size is bounded by the available qubit count rather than by
the full problem size. Consequently, as quantum hardware continues to
grow, the hybrid solver is positioned to address instances that lie
beyond the practical reach of exact classical methods.

\section{Conclusion}\label{sec:conclusion}

This work presents a hybrid classical--quantum framework for the Maximum Independent Set problem that combines deterministic graph reduction with quantum-guided optimization. Classical exact methods have exponential worst-case complexity, while heuristics lack optimality guarantees. Purely quantum approaches, meanwhile, remain constrained by limited qubit connectivity, circuit depth, and noise on current hardware.

The framework comprises four components. First, deterministic graph reduction rules identify and fix vertices whose membership in an MIS can be established exactly, reducing the problem before quantum optimization. Second, hardware-aware circuit generation maps the reduced graph to the processor topology to limit routing overhead and circuit depth. Third, hierarchical QAOA parameter optimization recursively expands circuit depth using previously optimized parameters, reducing optimization cost and improving convergence. Fourth, quantum-guided node selection uses the full measurement distribution, rather than individual lowest-energy bitstrings, to extract statistical information for probabilistic graph reduction.

Experiments on fifteen benchmark graphs yield an average approximation ratio of 0.996, with optimal solutions obtained for fourteen instances, including graphs with up to 186 vertices. On C125-9, the quantum-guided approach significantly outperforms the uniform baseline, achieving a large effect size (Cliff's $\delta=0.649$) and concentrating markedly more probability mass in the high-quality tail (for instance, $12\times$ more likely to reach size $31$ or larger, and recovering the optimum where the baseline never does). The complementary roles of the classical and quantum stages are illustrated by \texttt{es60fst02}, where classical reduction eliminates up to 62 vertices in a single iteration and quantum optimization resolves the remaining combinatorial ambiguity.

The framework was further evaluated on synthetic pairwise natural-gas contract-compatibility graphs with up to 900 contracts. These experiments validate the MIS abstraction rather than a complete pipeline scheduling model incorporating cumulative segment-period capacity and hydraulic constraints. Accordingly, the MIS stage serves as the first step of a two-stage screening procedure: it identifies small families of mutually pairwise-compatible, high-cardinality contract sets that can subsequently be verified against aggregate pipeline constraints. The hybrid solver matches the CPLEX optimum on four of six instances and remains within two contracts on the others, while consistently outperforming an uninformed random frozen-node baseline. Classical preprocessing reduces the problem size by factors of 4--7, enabling hardware-sized kernels to be processed on current quantum devices.

Overall, the results show that iterative classical--quantum feedback can effectively address larger MIS instances than would be feasible using quantum optimization alone. By restricting quantum computation to irreducible kernels and using measurement outcomes to guide subsequent reductions, the framework progressively shrinks the problem while constructing a valid independent set. Its hardware-aware architecture can naturally benefit from future improvements in qubit count, connectivity, and coherence, providing a practical pathway for applying near-term quantum computing to large combinatorial optimization problems.

We close by restating the intended reading of these results. We do not present a faster MIS solver: at hardware-feasible sizes, classical exact methods such as branch-and-reduce and KaMIS/ReduMIS remain both quicker and provably optimal. The contribution is the method---hardware-aware embedding, the hierarchical QAOA schedule, and quantum-guided reduction---which is not specific to MIS and is meant to transfer to near-term quantum optimization more broadly. MIS is a deliberately chosen testbed: its known optima and strong classical reductions make the quantum contribution measurable rather than merely plausible, and they underwrite a hardware demonstration with ground truth at up to 125 qubits. Because the quantum workload here scales with the size of the irreducible kernel---that is, with available qubit count---rather than with the full problem size, the same architecture is positioned to follow improvements in hardware into the dense regimes where exact classical methods lose their footing. Establishing where that crossover lies, and on which graph families it arrives first, is the natural next question.

\section{Future Work}

Several directions can extend the framework. First, advanced error-mitigation methods, including zero-noise extrapolation and probabilistic error cancellation, could improve solution quality on noisy hardware. Second, problem-inspired mixers and adaptive variational ans\"{a}tze  could improve performance for specific graph classes. Third, machine-learning-assisted parameter transfer could provide more effective QAOA warm starts for unseen instances and reduce optimization overhead. Fourth, extending the approach to related problems such as maximum clique, graph coloring, and vertex cover would assess its broader applicability. Finally, theoretical analysis of approximation guarantees and convergence properties could clarify the conditions under which the hybrid strategy provides advantages over classical and quantum baselines and establish stronger performance guarantees.

\begin{acknowledgments}
We thank Stefan W\"orner for insightful discussions that shaped our embedding of the problem graph and its transpilation. We thank Daniel Egger for his help implementing the code used to optimize the quantum-circuit parameters. We further thank Stefan W\"orner, Daniel Egger, and Christa Zoufal for valuable discussions and for their careful review of the manuscript.
\end{acknowledgments}

\appendix
\section{Generation of Natural Gas Transportation Compatibility Graphs}
\label{app:benchmark_generation}

Synthetic compatibility graphs were generated to model the contract selection problem encountered in natural gas transportation systems. The objective is to create realistic benchmark instances whose graph structure resembles scheduling and resource allocation problems arising in pipeline transportation while allowing control over graph size and density.

Each graph vertex corresponds to a transportation contract. A contract is characterized by the following attributes:

\begin{itemize}
\item \textbf{Contract type:} Firm (60\%) or Interruptible (40\%). A \emph{Firm} contract provides a guaranteed transportation capacity under the agreed terms, whereas an \emph{Interruptible} contract provides transportation capacity that may be curtailed when pipeline capacity is constrained.
\item \textbf{Transportation capacity:}
\begin{itemize}
\item Firm: 50,000--300,000 MMBtu/day,
\item Interruptible: 20,000--150,000 MMBtu/day.
\end{itemize}
Here, \textbf{MMBtu/day} denotes \emph{million British thermal units per day}, a standard unit for expressing the daily quantity of natural gas transported through a pipeline.
\item \textbf{Transportation rate:}
\begin{itemize}
\item Firm: \$0.05-\$0.15 per MMBtu,
\item Interruptible: \$0.10-\$0.25 per MMBtu.
\end{itemize}
\item \textbf{Receipt and delivery locations} selected from representative natural gas hubs.
\item \textbf{Pipeline segments} traversed by the contract.
\item \textbf{Delivery schedule}, specified as a subset of weeks within a 52-week planning horizon.
\end{itemize}

The transportation network consists of multiple pipeline segments, each assigned a maximum daily throughput between 100,000 and 1,000,000 MMBtu/day. Contract routes are generated by assigning one to four pipeline segments, while delivery schedules are sampled from durations of 1, 4, 8, 12, 26, or 52 weeks. Seasonal demand is incorporated by partitioning the planning horizon into winter (Weeks 1--10 and 48--52), summer (Weeks 24--36), and shoulder seasons (Weeks 11--23 and 37--47). An overlap factor $\alpha\in[0,1]$ controls the concentration of contracts on common pipeline segments and peak-demand periods, thereby regulating the degree of competition among contracts.

A compatibility graph $G=(V,E)$ is constructed by representing each contract as a vertex. Two contracts are connected by an edge if they can be executed simultaneously without violating operational constraints. Compatibility is determined using the following rules:

\begin{enumerate}
\item \textbf{Temporal compatibility:} Contracts with non-overlapping delivery schedules are always compatible.
\item \textbf{Spatial compatibility:} Contracts using disjoint pipeline segments are always compatible.
\item \textbf{Capacity constraints:} For every shared pipeline segment,
\[
\kappa_i+\kappa_j \leq \theta(\alpha)C_s,
\]
where $\kappa_i$ and $\kappa_j$ denote the transportation capacities of the two contracts, $C_s$ is the capacity of pipeline segment $s$, and
\[
\theta(\alpha)=0.8-0.2\alpha.
\]
If both contracts are firm, a stricter limit of $(\theta(\alpha)-0.1)C_s$ is enforced.
\item \textbf{Operational compatibility:} Contracts sharing pipeline segments are additionally screened for flow conflicts using a probabilistic model that captures operational interference within the network.
\end{enumerate}

An edge is added to the compatibility graph only when all compatibility conditions are satisfied. The resulting graph models mutually compatible transportation contracts. Since the contract selection problem is formulated as a Maximum Clique problem, the complement of the compatibility graph is constructed and solved as a Maximum Independent Set problem using the proposed hybrid quantum--classical algorithm.

\begin{table*}[t]
\centering
\small
\setlength{\tabcolsep}{6pt}
\renewcommand{\arraystretch}{1.15}
\caption{Characteristics of the transpiled QAOA circuits for the benchmark instances. The reported values correspond to the circuit executed on the quantum backend in the first round of the hybrid reduce--QAOA--select loop, after transpilation.}
\label{tab:circuit_characteristics}

\begin{tabular}{lcccccccc}
\toprule
\multirow{2}{*}{\textbf{Benchmark}} &
\multicolumn{2}{c}{\textbf{Graph Statistics}} &
\multicolumn{2}{c}{\textbf{QAOA Parameters}} &
\multicolumn{2}{c}{\textbf{Circuit Size}} &
\multicolumn{2}{c}{\textbf{Two-Qubit Operations}} \\
\cmidrule(lr){2-3}
\cmidrule(lr){4-5}
\cmidrule(lr){6-7}
\cmidrule(lr){8-9}
&
\textbf{Nodes} &
\makecell{\textbf{No. of}\\\textbf{Qubits}} &
\textbf{$p$} &
\textbf{$k_{\mathrm{swap}}$} &
\makecell{\textbf{Total}\\\textbf{Gates}} &
\makecell{\textbf{Total}\\\textbf{Depth}} &
\makecell{\textbf{2Q}\\\textbf{Depth}} &
\makecell{\textbf{CZ}\\\textbf{Gates}} \\
\midrule
karate                    &  34 &   4 & 2 &  2 &   278 &  59 &   8 &   12 \\
farm                      &  17 &  10 & 2 &  4 &   522 & 121 &  20 &   40 \\
football                  &  35 &  15 & 2 &  2 &   647 &  59 &   8 &   56 \\
ibm32                     &  32 &  32 & 3 &  2 &  1713 & 122 &  18 &  186 \\
es60fst01                 & 123 &  34 & 2 &  4 &  1385 & 91  &  14 &  134 \\
chesapeake                &  39 &  39 & 3 &  4 &  2533 & 259 &  45 &  272 \\
sloane\_1dc\_64           &  64 &  50 & 5 &  2 &  4246 & 200 &  30 &  490 \\
insecta-ant-colony1-day38 &  56 &  54 & 3 &  2 &  2817 & 122 &  18 &  318 \\
sloane\_2dc\_128          & 128 &  70 & 2 &  2 &  2516 &  83 &  12 &  276 \\
es60fst03                 & 113 &  74 & 3 &  8 &  4408 & 256 &  46 &  483 \\
es60fst04                 & 162 &  85 & 3 &  8 &  4984 & 264 &  46 &  555 \\
sloane\_1dc\_128          & 128 & 112 & 3 &  4 &  6670 & 245 &  42 &  780 \\
sloane\_1zc\_128          & 128 & 112 & 3 &  4 &  6611 & 273 &  49 &  767 \\
es60fst02                 & 186 & 124 & 3 & 12 &  7028 & 261 &  45 &  772 \\
C125-9                    & 125 & 125 & 3 &  3 &  6170 & 199 &  35 &  642 \\
\bottomrule
\end{tabular}
\end{table*}

\section{Hardware-Aware Quantum Circuit Characteristics}
\label{app:circuits}

Table~\ref{tab:circuit_characteristics} summarizes the characteristics of the transpiled QAOA circuits executed on the IBM quantum backend. The benchmark graphs range from 17 to 186 vertices. Following the initial application of the classical MIS reduction rules, the resulting irreducible kernels contain between 4 and 125 vertices, corresponding directly to the number of qubits required by the quantum circuit. For many benchmarks, the preprocessing stage substantially reduced the quantum resource requirements. For example, the \textit{es60fst01} and \textit{es60fst02} instances were reduced from 123 and 186 vertices to kernels of only 34 and 124 vertices, respectively. In contrast, instances such as \textit{ibm32} and \textit{C125-9} admitted little or no reduction, requiring the quantum circuit to operate on the complete graph. These results demonstrate the effectiveness of the reduction framework in extending the range of problem sizes that can be executed on current quantum hardware.

The proposed hardware-aware circuit synthesis approach successfully generated executable QAOA circuits for all benchmark instances while maintaining moderate circuit depths. The transpiled circuits contain between 278 and 7\,028 gates, with total circuit depths ranging from 59 to 273. As expected, the circuit complexity increases with both the kernel size and the selected QAOA depth ($p$). For example, the \textit{sloane\_1dc\_64} instance, executed with $p=5$, requires 4\,246 gates and a two-qubit depth of only 30, whereas the larger \textit{es60fst02} instance ($p=3$, 124-vertex kernel) is the largest transpiled circuit, with 7\,028 gates, a two-qubit depth of 45, and 772 CZ gates.

Two-qubit operations are the dominant source of noise on superconducting quantum processors and therefore provide an important indicator of circuit quality. Across all benchmarks, the transpiled circuits exhibit two-qubit depths ranging from only 8 to 49, while the number of CZ gates varies from 12 to 780. Notably, the growth in two-qubit depth is considerably slower than the increase in total gate count, indicating that the proposed mapping and routing strategy effectively exploits parallel execution of commuting entangling operations while minimizing routing overhead. Consistent with this, the peak entangling metrics do not coincide with the largest circuit by total gate count: the maximum two-qubit depth (49) and CZ count (780) occur on the denser 128-vertex kernels \texttt{sloane\_1zc\_128} and \texttt{sloane\_1dc\_128}, rather than on \texttt{es60fst02}, which has the most gates overall. The hardware interaction parameter $k_{\mathrm{swap}}$ varies between 2 and 12 depending on the connectivity of the reduced graph, enabling efficient embedding without excessive increases in entangling depth.

Overall, these results demonstrate that the combination of aggressive classical graph reduction and hardware-aware quantum circuit synthesis enables the execution of QAOA circuits for MIS kernels containing up to 125 qubits. Since the current IBM quantum processor provides 156 physical qubits, the proposed compilation strategy leaves additional capacity for solving even larger reduced kernels as quantum hardware and gate fidelities continue to improve. The consistently moderate two-qubit depths achieved after transpilation further indicate that the proposed framework is well suited for executing large-scale MIS instances on current-generation superconducting quantum hardware.

\bibliography{aapmsamp}

\end{document}